\let\doendproof\endproof
\renewcommand\endproof{~\hfill\qed\doendproof}
\newcommand{\cmark}{\ding{51}}%
\newcommand{\xmark}{\ding{55}}%
\newcommand{\set}[1]{\{#1\}}
\newcommand{\ie}{i.e.}
\newcommand{\eg}{e.g.}
\newcommand{\etc}{etc.}
\newcommand{\etal}{et al.}
\newcommand{\dbm}{DBM}
\newcommand{\dbms}{DBMs}
\newcommand{\DBMs}{Difference Bound Matrices}
\renewcommand{\implies}{\longrightarrow}
\renewcommand{\iff}{\longleftrightarrow}
\newcommand{\ta}{TA}
\newcommand{\tsle}{\preceq}
\newcommand{\tsse}{\trianglelefteq}
\newcommand{\TS}{(S, \to)}
\renewcommand{\ts}{(S, \to, \tsle)}
\newcommand{\reaches}[2]{#1 \to^* #2}
\newcommand{\reachesp}[2]{#1 \to^+ #2}
\newcommand{\steps}[3]{#1 \to #2 \to \ldots \to #3}
\newcommand{\run}[2]{#1 \to #2 \to \ldots}
\newcommand{\stepsprime}[3]{#1 \to' #2 \to' \ldots \to' #3}
\newcommand{\tole}{\to_{\tsle}}
\newcommand{\stepsle}[3]{#1 \tole #2 \ldots \tole #3}
\newcommand{\runle}[2]{#1 \tole #2 \tole \ldots}
\newcommand{\tose}{\to_{\tsse}}
\newcommand{\stepsse}[3]{#1 \tose #2 \tose \ldots \tose #3}
\newcommand{\rI}{(I, \tsse)}
\newcommand{\fw}{topological numbering}
\newcommand{\tosep}{\leadsto_{\tsse}}
\renewcommand{\sim}{\sqsubseteq}
\newcommand{\Sim}{\sim}
\newcommand{\zgto}{\Rightarrow}
\newcommand{\extra}{\alpha}
\newcommand{\zgtoalpha}{\Rightarrow_{\alpha}}
\newcommand{\dbmto}{\zgto_{\textit{DBM}}}
\newcommand{\alphalu}{\alpha_{\preceq LU}}
\newcommand{\alphasim}{\alpha_{\preceq}}
\newcommand{\tasim}{\preceq}
\newcommand{\simlu}{\tasim_{LU}}
\newcommand{\tastep}[5]{#2 \to_{{#3},{#4}} #5}
\newcommand{\zgalpha}{\zgtoalpha}
\newcommand{\zgalphasim}{\Rightarrow_{\alphasim}}
\newcommand{\Rpos}{\ensuremath{\mathbb{R}_{\ge 0}}}
\newcommand{\Nat}{\ensuremath{\mathbb{N}}}
\newcommand{\Integers}{\ensuremath{\mathbb{Z}}}
\newcommand{\vali}{\ensuremath{\mathbf{0}}}
\newcommand{\abstractionLU}{\alphalu}
\newcommand{\tosubsum}{\ensuremath{\rightsquigarrow}}
\newcommand{\extraLUp}{\ensuremath{\mathsf{Extra}_{LU}^+}}
\newcommand{\Y}{\cmark} 
\newcommand{\N}{\xmark}
\newcommand{\IfNoThen}[1]{\State\textbf{if} {#1}}
\newcommand{\ThenIndent}[1]{\State \hspace{\algorithmicindent}\textbf{then} #1}
\newcommand{\rejectcert}{\ThenIndent{\textbf{reject certificate}}}
\newcommand{\acceptcert}{\State\textbf{accept certificate}}
\tikzstyle{every node}=[initial text=]
\tikzstyle{location}=[rounded corners, minimum size=12pt, draw=black, inner sep=2.5pt]
\begin{document}
%
\title{Certifying Emptiness of Timed B\"uchi Automata}
%
%
\author{Simon Wimmer\inst{1}\orcidID{0000-0001-5998-4655}
\and
Frédéric Herbreteau\inst{2}\orcidID{0000-0002-1029-2356}
\and
Jaco van de Pol\inst{3}\orcidID{0000-0003-4305-0625} 
}
%
\authorrunning{S. Wimmer et al.}
%
\institute{%
Fakultät f\"ur Informatik, Technische Universität M\"unchen, Munich, Germany
\email{wimmers@in.tum.de}
\and
Univ. Bordeaux, CNRS, Bordeaux INP, LaBRI, UMR 5800, 33400, Talence, France 
\and
Aarhus University, Department of Computer Science, Denmark
}
%
\maketitle              
\begin{abstract}
Model checkers for timed automata are widely used to verify safety-critical, real-time systems.
State-of-the-art tools achieve scalability by intricate abstractions.
We aim at further increasing the trust in their verification results, in particular for checking liveness properties. 
To this end, we develop an approach for extracting certificates for the emptiness
of timed B\"uchi automata from model checking runs. These
certificates can be double checked by a certifier that we formally verify
in Isabelle/HOL. We study liveness certificates in an abstract setting 
and show that our approach is sound and complete.
To also demonstrate its feasibility,
we extract certificates for several models checked by TChecker and Imitator,
and validate them with our verified certifier.

\keywords{Timed Automata  \and Certification \and Model Checking}
\end{abstract}
\section{Introduction}
Real-time systems are notoriously hard to analyze due to intricate
timing constraints. A number of model checkers for timed automata (TA) \cite{Alur1994}
have been implemented and successfully applied to the verification of safety-critical timed
systems. Checking liveness properties of timed automata has revealed to be particularly
important, as emphasized by a bug in the standard model of the CSMA/CD protocol that has
been discovered only recently~\cite{Herbreteau2016}. Several algorithms have been implemented
to scale the verification of liveness specifications to larger 
systems~\cite{Tripakis:FMSD:2005,Tripakis:TOCL:2009,Li:FORMATS:2009,multicore_ta_ltl,Herbreteau2016}.
Users of timed automata model checkers put a high amount of trust in their verification results.
However, as verification algorithms
get more complex, it becomes highly desirable 
to justify the users' confidence in their correctness.

There are two main approaches to ensure high degrees of trustworthiness of automated tools:
verification and certification.
In the first approach, correctness of the verification tool (its implementation and its theory)
is proved using another semi-automated method.
This technique has been applied to model checkers \cite{munta,CAVA:2013} and SAT solvers \cite{VerifiedSAT}.
In the second approach, the automated tool produces a certificate, i.e.\ a proof for its 
verification result.
Then an independent tool, the certifier, checks that the proof is indeed valid.
In the best case, the certifier itself is formally verified.
Examples include SAT certificate checking \cite{GRAT,LRAT} and unreachability checking
of TA \cite{Tacas20}.

The certification approach promises many advantages over verification,
since certificate checking is much simpler than producing the certificate.
This drastically reduces the burden of semi-automated verification,
which is a laborious task. While proving correctness of a competitive verification 
tool might be prohibitively complicated, it may be feasible
to construct an efficient verified certifier instead (in the case of SAT \cite{GRAT},
the verified certifier was even faster than the original SAT solvers).
Finally, there is a wide variety of model checking algorithms 
and high-performance implementations, which are
suited for different situations. Instead of verifying them one by one,
these tools could produce certificates in a common format,
so they can be checked by a single verified certifier.

\subsection{Related Work}
Model checking LTL properties for timed automata 
~\cite{Alur1994,Tripakis:FMSD:2005,Tripakis:TOCL:2009,Li:FORMATS:2009,multicore_ta_ltl,Herbreteau2016}
consists of three conceptual steps: the LTL formula is transformed into a B\"uchi automaton, the semantics of the TA is computed as a (finite) zone graph, and the cross-product of these objects is checked for accepting cycles.
The two main alternative algorithms for detecting accepting cycles are
Nested Depth-First Search (NDFS) and the inspection of the Strongly Connected Components (SCC).
The NDFS algorithm was generalized to TA in LTSmin~\cite{multicore_ta_ltl,DBLP:conf/tacas/KantLMPBD15} and
extended to parametric TA in Imitator~\cite{DBLP:conf/iceccs/NguyenPP18,imitator}.
The SCC-based algorithm has also been generalized to TA in TChecker~\cite{Herbreteau2016,Tchecker}.
Both algorithms support {\em abstraction and subsumption between states} to reduce the state space.

{\em Verified model checking.}
An early approach targeted the verification of a $\mu$-calculus model checker in Coq~\cite{mu_calculus_certification}.
The NDFS algorithm was checked in the program verifier Dafny~\cite{DBLP:conf/fmics/Pol15,DBLP:conf/icse/Leino04},
while a multi-core version of it was checked in the program verifier Vercors~\cite{DBLP:conf/tacas/OortwijnHJP20,DBLP:conf/ifm/BlomDHO17}. 
A complete, {\em executable} LTL model checker was verified in the interactive theorem prover Isabelle/HOL~\cite{CAVA:2013}
and later extended with partial-order reduction \cite{Brunner2017}. 
A verified model checker for TA, Munta~\cite{munta}, has also been constructed in Isabelle/HOL~\cite{Tacas18,nipkow_isabelle/hol_2002}.

{\em Certification.}
A certifier for reachability properties in TA has been proposed very recently~\cite{Tacas20}. A certification approach for LTL model checking was proposed in~\cite{Griggio2018}. It uses k-liveness to reduce the problem to IC3-like invariant checking. 

{\em Contributions.}
In this paper, we extend certificates for unreachability of TA \cite{Tacas20} to certificates
for liveness properties, i.e.\ emptiness of timed B\"uchi automata (TBA). We propose a common certification approach 
for tools using different algorithms and various abstractions \cite{Herbreteau2016,multicore_ta_ltl}.
These certificates can be much smaller than the original state space, due to the use of subsumption and abstraction.
The difficulty here is that a careless application of subsumption can introduce spurious accepting cycles.
Our new contributions are \footnote{An artifact containing our code and benchmarks is available on \href{https://doi.org/10.6084/m9.figshare.12620582.v1}{figshare} \cite{wimmer_herbreteau_van_de_pol_2020}.}:
\begin{itemize}
\item We introduce an abstract theory for certificates of B\"uchi emptiness, which can be instantiated for zone graphs of TBA with subsumptions. 
\item We developed a fully, mechanically verified certifier in Isabelle/HOL.
In particular, our certifier retains the ability to check certificates in parallel.
\item We show that the previous certifier for reachability and our extension to B\"uchi emptiness
are compatible with implicit abstraction techniques for TA.
\item We demonstrate feasibility by generating and checking certificates for two external model checkers, representing the NDFS and the SCC approach.
\end{itemize}
Note that checking counter-examples is easy in practice, but checking ``true'' model checking results is much harder. This is exactly what we address with certifying emptiness of TBA.
The main application would be to increase the confidence in safety-critical real-time applications, which have been verified with an existing model checker.
Another possible application of the certifier would be to facilitate a new model checking contest for liveness properties of TA. 
\section{Timed Automata and Model Checking}
\label{sec:ta-mc}
In this section, we set the stage for the rest of the paper by recapitulating
the basic notions of TA and summarizing the essential concepts
of TBA verification.

\subsection{Verification Problems for Timed Automata}

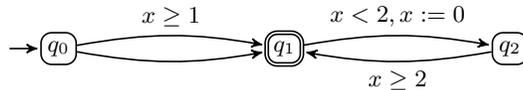
\begin{figure}[b]
    \centering 
\begin{tikzpicture}[->,>=stealth',shorten >=1pt,auto,node distance=3cm,semithick]
  \node[initial,location]	(A) {$q_0$};
  \node[location,accepting] (B) [right of=A] {$q_1$};
  \node[location] (C) [right of=B] {$q_2$};

  \path (A) edge [bend right=10] (B)
  					edge [bend left=10] node [above] {$x\geq 1$} (B)
          (B) edge [bend left=10] node [above] {$x<2, x:=0$} (C)
          (C) edge [bend left=10] node [below] {$x\geq 2$} (B);
\end{tikzpicture}
    \caption{\label{fig:ta}Timed (B\"uchi) automaton with initial state $q_0$ and accepting state $q_1$.}
\end{figure}

A TA $A=(Q,q_0,F,I,T,X)$ is a finite automaton extended with a finite set of {\em clocks} $X$. $Q$ is a finite set of states with initial state $q_0\in Q$ and accepting states $F \subseteq Q$. $I$ associates an {\em invariant} constraint to every state and 
$T$ associates a {\em guard} constraint $g$ and {\em clock reset} $R\subseteq X$ to each transition. 
Here \emph{(clock) constraints} are conjunctions of formulas $x \# c$, where $x$ is a clock, $c \in \Nat$ and $\# \in \set{<,\le,=,\ge,>}$.
Observe that we exclude diagonal constraints of the form $x-y\#c$.
An example of a timed automaton is depicted in Figure~\ref{fig:ta}.

A clock valuation $v : \, X \to \Rpos$ associates a non-negative real value to each clock $x \in X$. 
A configuration is a pair $(q,v)$ where $q$ is a state and $v$ is a clock valuation. The initial configuration is $(q_0,\vali)$.
Without loss of generality, we assume that the initial clock valuation $\vali$ satisfies the invariant $I(q_0)$. There are two kind of steps from a configuration $(q,v)$:
\begin{description}
    \item[delay] $(q,v) \to_{\delta} (q,v')$ for a delay $\delta \in \Rpos$ if for every clock $x \in X$, $v'(x) = v(x) + \delta$, and $v'$ satisfies the invariant $I(q)$;
    \item[transition] $(q,v) \to_t (q',v')$ for transition $t = (q,g,R,q') \in T$ if $v$ satisfies the guard $g$, $v'(x) = 0$ if $x \in R$ and $v'(x) = v(x)$ otherwise, and $v'$ satisfies $I(q')$.
\end{description}
We write $(q,v) \to_{\delta, t} (q',v')$ if there exists a configuration $(q,v'')$ such that $(q,v) \to_{\delta} (q,v'') \to_t (q',v')$. A run of a timed automaton is an (infinite) sequence of transitions of the form:
$(q_0,\vali) \to_{\delta_0,t_0} (q_1,v_1) \to_{\delta_1,t_1} \cdots$.
A run is \emph{non-Zeno} if the sum of its delays is unbounded.

The \emph{reachability problem} asks, given a timed automaton $A$, if there exists a finite run from the initial configuration $(q_0,\vali)$ to an accepting configuration $(q_n,v_n)$ such that $q_n \in F$.

In timed B\"uchi automata (TBA), $F$ is interpreted as a B\"uchi acceptance condition.
The \emph{liveness problem} then asks, whether a given TBA $A$ is non-empty, \ie\ if there is an infinite non-Zeno run from the initial configuration $(q_0,\vali)$ that visits infinitely many accepting configurations $(q_i,v_i)$ with $q_i \in F$. In this paper, we work under the common assumption that TA only admit non-Zeno runs (see \cite{Tripakis:FMSD:2005} for a construction to enforce this on every \ta).

Both problems are known to be PSPACE-complete~\cite{Alur1994}. Due to density of time, these two verification problems cannot be solved directly from the transition system induced by configurations and steps. A well-known solution to this problem is the region graph construction of Alur and Dill~\cite{Alur1994}.
Yet, it is not used in practice, as the region graph is enormous even for rather simple automata.

\subsection{Zone Graph and Abstractions}

The practical solution that is implemented in state-of-the-art tools like UPPAAL~\cite{Larsen1997}, TChecker~\cite{Tchecker} and the Imitator tool~\cite{imitator} is based on zones. Let us fix a set of clocks $X$. A zone $Z$ is a set of valuations represented as a conjunction of constraints of the form $x \# c$ or $x - y \# c$ for $x, y \in X$, $\# \in \set{<,\le,=,\ge,>}$ and $c \in \Integers$. Zones can be efficiently represented using \DBMs\ (\dbms)~\cite{DBM}. Moreover, zones admit a canonical representation, hence equality and inclusion of two zones can be checked efficiently~\cite{uppaal}.

We now define the symbolic semantics~\cite{Daws:TACAS:1998} of a TA $A$.
Let $q, q'$ be two states of $A$, and let $W, W' \subseteq \Rpos^X$ be two non-empty sets of clock valuations. We have $(q,W) \zgto^t (q',W')$ for some transition $t \in T$, if $W'$ is the set of all clock valuations $v'$ for which there exists a valuation $v \in W$ and a delay $\delta \in \Rpos$ such that $(q,v) \to_{\delta,t} (q',v')$. In other words, $W'$ is the strongest postcondition of $W$ along transition $t$. The symbolic semantics of $A$ , denoted by $\zgto$, is the union of all $\zgto^t$ over $t \in T$. The symbolic semantics is a sound and complete representation of the finite and infinite runs of $A$. Indeed, $A$ admits a finite (resp. infinite) run $(q_0,v_0) \to_{\delta_0,t_0} (q_1,v_1) \to_{\delta_1,t_1} \ldots (q_n,v_n) \to_{\delta_n,t_n} \ldots$ if and only if there exists a finite (resp. infinite) path $(q_0,W_0) \zgto^{t_0} (q_1,W_1) \zgto^{t_1} \ldots (q_n,W_n) \zgto^{t_n} \ldots$ such that $v_i \in W_i$ for all $i \ge 0$ and $W_0 = \set{\vali}$~\cite{Daws:TACAS:1998}.
It is well-known that if $Z$ is a zone, and $(q,Z) \zgto (q',W')$ then $W'$ is a zone as well~\cite{uppaal}. Since $\set{\vali}$ is a zone, all the reachable nodes in $\zgto$ are zones as well. The reachable part of $\zgto$ is called the \emph{zone graph} of $A$. The nodes of the zone graph are denoted as $(q,Z)$ in the sequel and the zone graph is simply denoted by its transition relation $\zgto$. Fig.~\ref{fig:zg1} depicts the zone graph of the automaton in Fig.~\ref{fig:ta}.

\begin{figure}[t]
    \subfloat[Zone Graph]{\centering 
\begin{tikzpicture}[->,>=stealth',shorten >=1pt,auto,node distance=1.3cm,semithick]
  \node[initial,location]	(A) {$q_0,x\geq 0$};
  \node[location,accepting] (B1) [below left of=A] {$q_1,x\geq 0$};
  \node[location,accepting] (B2) [below right of=A] {$q_1,x\geq 1$};
  \node[location] (C1) [below of=B1] {$q_2,x\geq 0$};
  \node[location] (C2) [below of=B2] {$q_2,x\geq 2$};
  \node[location,accepting] (B3)  [below right of=C1] {$q_1,x\geq 2$};
    
  \path (A) edge (B1)
  					edge (B2)
          (B1) edge (C1)
          (B2) edge (C2)
          (C1) edge (B3)
          (C2) edge (B3);
\end{tikzpicture}\label{fig:zg1}}
    \hfill
    \subfloat[Liveness compatible]{\centering 
\begin{tikzpicture}[->,>=stealth',shorten >=1pt,auto,node distance=1.3cm,semithick]
  \node[initial,location]	(A) {$q_0,x\geq 0$};
  \node[location,accepting] (B1) [below left of=A] {$q_1,x\geq 0$};
  \node[location,accepting] (B2) [below right of=A] {$q_1,x\geq 1$};
  \node[location] (C1) [below of=B1] {$q_2,x\geq 0$};
  \node[location,accepting] (B3)  [below right of=C1] {$q_1,x\geq 2$};
    
  \path (A) edge (B1)
  					edge (B2)
          (B1) edge (C1)
          (B2) edge [draw=blue,decoration={snake,amplitude=1pt,segment length=4pt},decorate] (B1)
          (C1) edge (B3);
\end{tikzpicture}\label{fig:zg2}}
    \hfill
    \subfloat[Not compatible]{\centering 
%

\begin{tikzpicture}[->,>=stealth',shorten >=1pt,auto,node distance=1.3cm,semithick]
  \node[initial,location]	(A) {$q_0,x\geq 0$};
  \node[location,accepting] (B1) [below left of=A] {$q_1,x\geq 0$};
  \node[location,accepting] (B2) [below right of=A] {$q_1,x\geq 1$};
  \node[location] (C1) [below of=B1] {$q_2,x\geq 0$};
  \node[location,accepting] (B3)  [below right of=C1] {$q_1,x\geq 2$};
    
  \path (A) edge (B1)
  					edge (B2)
          (B1) edge (C1)
          (B2) edge [draw=blue,decoration={snake,amplitude=1pt,segment length=4pt},decorate] (B1)
          (C1) edge (B3)
          (B3) edge [bend right=30,draw=blue,decoration={snake,amplitude=1pt,segment length=4pt},decorate] (B1);
\end{tikzpicture}\label{fig:zg3}}
    \caption{\label{fig:zgs}Three subsumption graphs for the automaton in Fig.~\ref{fig:ta}.}
\end{figure}
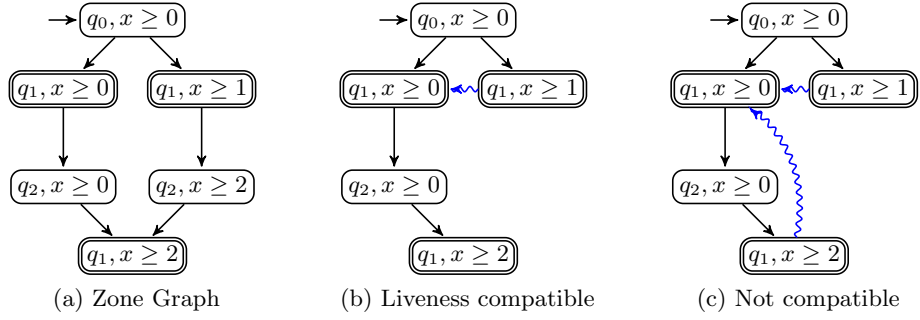

Still, the zone graph of a timed automaton may be infinite. As a remedy, finite abstractions have been introduced in the literature~\cite{Daws:TACAS:1998,static-guard-analysis,Behrmann:STTT:2006}.

An \emph{abstraction} $\extra$ transforms a zone $Z$ into a zone $\extra(Z)$ such that $Z \subseteq \extra(Z)$, $\extra(\extra(Z)) = \extra(Z)$, and every run that is feasible from a valuation $v' \in \extra(Z)$ is simulated by a run from a valuation $v \in Z$.
Such abstractions are called extrapolations in the literature~\cite{Behrmann:STTT:2006}.
An abstraction is finite when the set of abstracted zones $\set{\extra(Z) \, | \, Z \text{ is a zone}}$ is finite. Given an abstraction $\extra$, the \emph{abstracted zone graph} has initial node $(q,\extra(\set{\vali}))$ and transitions of the form $(q,Z) \zgtoalpha^t (q',\extra(Z'))$ for each transition $(q,Z) \zgto^t (q',Z')$. Let $\zgtoalpha$ denote the union of all $\zgtoalpha^t$ over $t \in T$. The abstracted zone graph is sound and complete: there is a run $(q_0,v_0) \to_{\delta_0,t_0} (q_1,v_1) \to_{\delta_1,t_1} \ldots\, (q_n,v_n)\; (\to_{\delta_n,t_n} \ldots)$ in $A$ if and only if there is an infinite path $(q_0,Z_0) \zgtoalpha^{t_0} (q_1,Z_1) \zgtoalpha^{t_1} \ldots\, (q_n,Z_n)\; (\zgtoalpha^{t_n} \ldots)$ with $v_i \in Z_i$ for all $i \ge 0$. Hence, when $\extra$ is a finite abstraction, the verification problems for a TA $A$ can be algorithmically solved from its abstracted zone graph. The abstraction $\extraLUp$~\cite{Behrmann:STTT:2006} is implemented by state-of-the-art verification tools UPPAAL~\cite{Larsen1997} and TChecker~\cite{Tchecker}. Our results hold for any finite, sound and complete abstraction. The abstracted zone graph is denoted $\zgtoalpha$ in the sequel.

\subsection{Subsumption}

Consider the TA in Figure~\ref{fig:ta} and its zone graph in Figure~\ref{fig:zg1}. Observe that every run that is feasible from node $(q_1, x \ge 1)$ is also feasible from $(q_1, x \ge 0)$ since the zone $x \ge 1$ is included in the zone $x \ge 0$ (recall that zones are sets of clock valuations). We say that $(q_1, x \ge 1)$ is \emph{subsumed} by the node $(q_1, x \ge 0)$. As a result, if an accepting node is (repeatedly) reachable from $(q_1, x \ge 1)$, then an accepting node is also (repeatedly) reachable from $(q_1, x \ge 0)$.

This leads to a crucial optimization for the verification of TA:
reachability and liveness verification problems can be solved without exploring subsumed nodes. This optimization is called inclusion abstraction in~\cite{Daws:TACAS:1998}. Figure~\ref{fig:zg2} shows the graph obtained when the exploration is stopped at node $(q_1, x \ge 1)$. All the runs that are feasible from $(q_1, x \ge 1)$ are still represented in this graph, as they can be obtained by first taking the subsumption edge from $(q_1, x \ge 1)$ to $(q_1, x \ge 0)$ (depicted as a blue squiggly arrow), and then any sequence of (actual or subsumption) edges from $(q_1, x \ge 0)$. Such graphs with both actual and subsumption edges are called \emph{subsumption graphs} in the sequel.

It is tempting to use subsumption as much as possible, and only explore maximal nodes (w.r.t.\ zone inclusion). While this is correct for the verification of reachability properties, subsumption must be used with care for liveness verification. The bottom node $(q_1, x \ge 2)$ in Figure~\ref{fig:zg2} is also subsumed by the node $(q_1, x \ge 0)$. A subsumption edge can thus be added between these two nodes as depicted in Figure~\ref{fig:zg3}. However, due to this new subsumption edge, the graph has a B\"uchi accepting path (of actual and subsumption edges) that does not correspond to any run of the timed automaton in Figure~\ref{fig:ta}. Indeed, subsumption leads to an overapproximation of the runs of the automaton. While all the runs from node $(q_1, x \ge 2)$ are feasible from node $(q_1, x \ge 0)$, the converse is not true: the transition $q_1 \xrightarrow{x<2, x:=0} q_2$ is not feasible from $(q_1, x \ge 2)$.

The subsumption graphs in Figure~\ref{fig:zg2} and~\ref{fig:zg3} can be seen as certificates issued by verification algorithms. The graph in Figure~\ref{fig:zg2} is a valid certificate for liveness verification as 1)~it contains no accepting paths, and 2)~every run of the automaton is represented in the graph. In constrast, the graph in Figure~\ref{fig:zg3} is not a valid certificate for liveness verification as it has an accepting path that does not correspond to any run of the automaton. In the next sections, we introduce an algorithm to check the validity of certificates produced by liveness verification algorithms, as well as a proven implementation of the algorithm.

\section{Certificates for B\"uchi Emptiness}
\label{sec:certificates_buchi_emptiness}

In this section, we study certificates for B\"uchi emptiness in the
setting of a slight variation of well-structured transition systems \cite{WSTS}.
First, we present reachability invariants, which certify that every run in the original system can be simulated on the states given in the invariant.
Next, we show that the absence of certain cycles in the invariant is sufficient
to prove that the original transition system does not contain accepting runs.
Then, we add a proof of absence of these cycles to the certificate.
Finally, we instantiate this framework for the case of \ta.

\subsection{Self-Simulating Transition Systems}
A \emph{transition system} $\TS$ consists of a set of states $S$ and a transition relation $\to \,\subseteq S \times S$.
If $S$ is clear from the context, we simply write $\to$.
We say that $\steps{s_1}{s_2}{s_n}$ is a path or that $\run{s_1}{s_2}$ is an (infinite) run
in $\to$ if $s_i \to s_{i+1}$ for all $i$.
Given an initial state $s_0$ and a predicate for accepting states $\phi$,
the path $\steps{s_0}{s_1}{s_n}$ is accepting if $\phi(s_n)$.
A run $\run{s_0}{s_1}$ is an (accepting) B\"uchi run if $\phi(s_i)$ for infinitely many $i$.

A transition system $\to$ is simulated by the transition system
$\to'$ if there exists a simulation relation $\sim$
such that:
\[
    \forall s, s', t.\; s \sim s' \wedge s \to t \implies \exists t'.\;
    s' \to' t' \wedge t \sim t'
\]
This \emph{simulation property} can be lifted to paths and runs:
\begin{proposition}
    \label{thm:simulation}
    If $\steps{s_1}{s_2}{s_n}\,(\to \ldots)$ is a path (run) and $s_1 \sim t_1$,
    then there is a path (run) $\stepsprime{t_1}{t_2}{t_n}\,(\to \ldots)$ with $s_i \sim t_i$
    for all $i$.
\end{proposition}

\begin{definition}
    A \emph{self-simulating transition system (SSTS)} $\ts$ consists of a transition system $\TS$
    and a quasi-order (a reflexive and transitive relation) $\tsle \,\subseteq S \times S$ on states such that $\to$ is simulated by $\to$ itself for $\tsle$.
\end{definition}
In comparison to well-structured transition systems \cite{WSTS},
our definition is slightly more relaxed, as we only demand that $\tsle$ is a quasi order,
not a well-quasi order.
Intuitively, transitivity of $\tsle$ is needed to allow for correct simulation by arbitrary ``bigger'' nodes.
In \ta, $\tsle$ corresponds to subsumption $\subseteq$, and $\to$ corresponds to $\zgto$.


\subsection{Reachability Invariants on Abstract Transition Systems}
In this section, we introduce the concept of \emph{reachability invariants}
for SSTS.
\begin{definition}
    A set $I \subseteq S$ is a \emph{reachability invariant} of an SSTS $\ts$
    iff for all $s \in I$ and $t$ with $s \to t$, there exists a $t' \in I$
    such that $t \tsle t'$.
\end{definition}
A useful invariant is also fulfilled by some inital state.
Such states will show up in theorems below.
In the remainder, unless noted otherwise, $\ts$ is an SSTS and $I$ is a reachability
invariant of it. Figures \ref{fig:zg1} to \ref{fig:zg3}
all form a reachability invariant for the zone graph from Figure \ref{fig:zg1}.

As was observed by Wimmer and von Mutius \cite{Tacas20}, reachability invariants can directly be applied as certificates for \emph{unreachability}.

\begin{definition}
    A predicate $\phi$ (for accepting states) is compatible with an SSTS $\ts$ iff
    for all $s, s' \in S$, if $\phi(s)$ and $s \tsle s'$, then also $\phi(s')$.
\end{definition}
An invariant $I$ can now certify that no accepting state $s$ with $\phi(s)$ is reachable:
\begin{theorem}
    \label{thm:reachability-certificate}
    If $\forall s \in I.\; \neg \phi(s)$, for some compatible $\phi$,
    $s_0 \in S$ and $s_0' \in I$ with $s_0 \tsle s_0'$, then
    there is no accepting path $\steps{s_0}{s_1}{s_n}$ with $\phi(s_n)$.
\end{theorem}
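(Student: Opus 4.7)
The plan is to argue by contradiction: assume an accepting path $\steps{s_0}{s_1}{s_n}$ with $\phi(s_n)$, and construct a matching path $t_0 \to t_1 \to \ldots \to t_n$ that stays entirely inside $I$, with $s_i \tsle t_i$ throughout. Compatibility of $\phi$ then forces $\phi(t_n)$, contradicting the assumption $\forall s \in I.\; \neg \phi(s)$.

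The construction proceeds by induction on $i$. For the base case, take $t_0 := s_0'$; by hypothesis, $t_0 \in I$ and $s_0 \tsle t_0$. For the step, suppose $t_i \in I$ and $s_i \tsle t_i$ have already been built. From $s_i \to s_{i+1}$ together with $s_i \tsle t_i$, the self-simulation property yields some $u$ with $t_i \to u$ and $s_{i+1} \tsle u$. Now apply the invariant property to $t_i \in I$ and $t_i \to u$: there exists $t_{i+1} \in I$ with $u \tsle t_{i+1}$. Transitivity of $\tsle$ then gives $s_{i+1} \tsle t_{i+1}$, closing the induction.

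The argument could equally be phrased by invoking Proposition \ref{thm:simulation} directly, but the cleanest way is the pointwise induction above, because the invariant $I$ is only closed under $\to$ up to $\tsle$ — so we cannot expect the simulating path to lie in $I$ exactly, only after an additional $\tsle$-step at every position. The composition \emph{simulation step + invariant closure + transitivity} at each index is the one genuine ingredient of the proof, and the main (very small) obstacle is noticing that we need exactly transitivity of $\tsle$ to glue these two $\tsle$-steps together.

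Having produced $t_n \in I$ with $s_n \tsle t_n$, compatibility of $\phi$ with $\ts$ turns $\phi(s_n)$ into $\phi(t_n)$, which directly contradicts $\forall s \in I.\; \neg \phi(s)$. Hence no such accepting path exists, proving the theorem.
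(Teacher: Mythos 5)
Your proof is correct and follows essentially the same route as the paper: the paper simply factors your pointwise induction (simulation step, then invariant closure, then transitivity of $\tsle$) out into a separate lemma (Theorem~\ref{thm:inv-run-simulation}, stated for $\tole$) and then applies compatibility of $\phi$ exactly as you do. The only cosmetic slip is calling the simulating sequence a $\to$-path in your opening sentence, when your construction actually yields a $\tole$-path ($t_i \to u \tsle t_{i+1}$); this does not affect the argument.
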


Note that this approach to certifying unreachability is also complete:
if no accepting state is reachable from $s_0$ in $\ts$, we can simply set $I := S$.
However, this is not practical for infinite transition systems, of course.
Thus we will revisit the question of completeness for \ta\ below.

Finally, we observe that the invariant
can be limited to a restriction of $\tsle$. 
\begin{definition}
    \label{def:restricted-reachability-invariant}
    A pair $\rI$ of a set $I \subseteq S$ and a binary relation $\tsse$ is a \emph{restricted reachability invariant} of an SSTS $\ts$
    iff:
    \begin{enumerate}
        \item For all $s \in I$ and $t$ with $s \to t$, there exists a $t' \in I$
        such that $t \tsse t'$.
        \item For all $s$, $t$, if $s \tsse t$, then also $s \tsle t$.
    \end{enumerate}
\end{definition}
In Figure\ 2, the $\tosubsum$-arrows would play the role of $\tsse$.
In Figure\ \ref{fig:zg2}, $(q_1, x \geq 0)$ is subsumed by both $(q_1, x \geq 1)$ and $(q_1, x \geq 2)$, but as we have seen in Figure\ \ref{fig:zg3}, it is crucial to disregard these subsumptions.
Therefore we need to consider restricted reachability invariants.

For any restricted reachability invariant, we can define a simulating transition system $\tose$:
\begin{definition}
    The transition system $(S, \tose)$ is defined such that
    $s \tose t'$ iff there exists a $t$ such that $s \to t$ and $t \tsse t'$.
\end{definition}
This simulation theorem is the key property of restricted reachability invariants \footnote{All proofs are omitted for brevity and can be found in the appendix.}:
\begin{theorem}
    \label{thm:inv-run-simulation}
    Given $s_1 \tsse t_1$ with $t_1 \in I$,
    if $\steps{s_1}{s_2}{s_n}\,(\to \ldots)$ is a path (run),
    then there is a path (run) $\stepsse{t_1}{t_2}{t_n}\,(\to \ldots)$ such that $s_i \tsse t_i$
    and $t_i \in I$ for all $i$.
\end{theorem}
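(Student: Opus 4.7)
My plan is to construct the sequence $(t_i)$ step by step from $(s_i)$, using induction on the length for the path case and dependent choice for the infinite run case. The inductive core is a single-step property: given $s_i \tsse t_i$ with $t_i \in I$ and $s_i \to s_{i+1}$, exhibit some $t_{i+1} \in I$ such that $t_i \tose t_{i+1}$ and $s_{i+1} \tsse t_{i+1}$. The base case is immediate from the hypothesis $s_1 \tsse t_1$ with $t_1 \in I$.

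For the inductive step, first I would lift $s_i \tsse t_i$ to $s_i \tsle t_i$ via condition 2 of the restricted reachability invariant. Applying the SSTS self-simulation property to $s_i \tsle t_i$ and $s_i \to s_{i+1}$ yields an intermediate state $u$ with $t_i \to u$ and $s_{i+1} \tsle u$. Since $t_i \in I$, condition 1 supplies some $t_{i+1} \in I$ with $u \tsse t_{i+1}$; then $t_i \tose t_{i+1}$ is immediate from the definition of $\tose$, witnessed by $u$. Concatenating $s_{i+1} \tsle u \tsse t_{i+1}$ gives $s_{i+1} \tsle t_{i+1}$ by condition 2 and transitivity of $\tsle$.

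The main obstacle I anticipate is tightening this last step from $\tsle$ to $\tsse$, as the conclusion of the theorem demands. The chain $s_{i+1} \tsle u$ and $u \tsse t_{i+1}$ does not compose into $s_{i+1} \tsse t_{i+1}$ without further structure; one either needs an additional closure property of $\tsse$ under left composition with $\tsle$, or a more refined choice of the intermediate state — for instance, choosing $u$ to be already in $I$ so that it can play the role of $t_{i+1}$ and the witness is the $\tsse$-edge to itself under reflexivity. I would inspect whether the intended instantiation (subsumption of zones in timed automata) supplies such a property, or whether the statement is used only with its $\tsle$-weakening downstream.

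For the infinite run case, I would package the inductive step as a choice function on the set of pairs $(s, t)$ satisfying $s \tsse t$ and $t \in I$, mapping each such pair together with a $\to$-successor of $s$ to a suitable successor pair. Iterating this function by dependent choice (or, equivalently in Isabelle/HOL, by a coinductive construction) starting from $(s_1, t_1)$ produces the required infinite sequence $(t_i)$ forming a $\tose$-run with all the companion properties simultaneously.
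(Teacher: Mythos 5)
Your approach is exactly the paper's: the official proof is a one-liner (``by induction (coinduction) on the inductive (coinductive) definition'' of the path or run), and the single-step argument you give --- lift $s_i \tsse t_i$ to $s_i \tsle t_i$ via condition~(2), apply self-simulation to get $u$ with $t_i \to u$ and $s_{i+1} \tsle u$, then apply condition~(1) of Definition~\ref{def:restricted-reachability-invariant} to get $t_{i+1} \in I$ with $u \tsse t_{i+1}$, hence $t_i \tose t_{i+1}$ --- is precisely the intended content, packaged by induction for paths and coinduction (or dependent choice) for runs. The obstacle you flag is genuine and not a defect of your argument: from $s_{i+1} \tsle u \tsse t_{i+1}$ one only obtains $s_{i+1} \tsle t_{i+1}$, since nothing in the definition makes $\tsse$ closed under left composition with $\tsle$, and the concrete $\tsse$ constructed in the proof of Theorem~\ref{thm:certificate-checker-correct} (which selects the subsuming certificate entry of \emph{minimal} index) need not have that closure property. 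The paper itself quietly concedes the point: the appendix restates this theorem with $\tsle$ in place of $\tsse$ in the conclusion, remarking only that ``the proofs are the same,'' and that weaker form is all that is used downstream --- Theorems~\ref{thm:reachability-certificate} and~\ref{thm:no-buechi-run} need only $t_i \in I$, the $\tose$-structure of the simulating run, and $s_i \tsle t_i$ to propagate $\phi$ via compatibility. So your proof establishes the version of the statement the development actually relies on; to obtain the literal $\tsse$-conclusion you would indeed need the extra hypothesis $s \tsle u \wedge u \tsse t \implies s \tsse t$ that you identify.
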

Analogously to $\tose$, the transition system $\tole$ can be defined,
and Theorem \ref{thm:inv-run-simulation} can be proved for $\tole$.
This is used for the proof of Theorem \ref{thm:reachability-certificate} (see \cite{Tacas20}).

\subsection{B\"uchi Emptiness on Abstract Transition Systems}
\label{sec:buchi-emptiness-abstractTS}
In this section, we first give a general means of certifying that a transition system
does not contain a cycle, and then combine the idea with reachability invariants to
certify the absence of B\"uchi runs on SSTS.
\begin{definition}
    \label{def:forward-numbering}
    Given a transition system $\to$ and an accepting state predicate $\phi$,
    a \emph{\fw} of $\to$
    is a function $f$ with an integer range such that:
    \begin{enumerate}
        \item For all $s, t$, if $s \to t$, then $f(s) \geq f(t)$.
        \item For all $s, t$, if $s \to t$ and $\phi(s)$, then $f(s) > f(t)$.
    \end{enumerate}
\end{definition}
\begin{proposition}
    \label{thm:topo-numbering-no-accepting-cycle}
    Let $f$ be a topological numbering of $\to$ and $\phi$.
    If there exists a path of the form $\steps{s}{s_1 \to s_2}{s}$,
    then $\neg \phi(s)$.
\end{proposition}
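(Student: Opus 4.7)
The plan is a short proof by contradiction, exploiting the two clauses of Definition~\ref{def:forward-numbering} applied around the cycle.

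Suppose for contradiction that $\phi(s)$ holds. The assumed path has the form $s \to s_1 \to s_2 \to \ldots \to s_{n-1} \to s$ for some $n \geq 1$ (with $s_1, s_2$ only naming the first two intermediate states in the displayed notation; the path may be any length that returns to $s$). The first edge $s \to s_1$ departs from an accepting state, so clause~(2) of Definition~\ref{def:forward-numbering} yields the strict inequality $f(s) > f(s_1)$.

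Next, I would apply clause~(1) to every remaining edge of the cycle, i.e.\ to $s_1 \to s_2, \; s_2 \to s_3, \; \ldots, \; s_{n-1} \to s$, obtaining the chain
\[
f(s_1) \geq f(s_2) \geq \ldots \geq f(s_{n-1}) \geq f(s).
\]
Combining with $f(s) > f(s_1)$ and transitivity of $\geq$ over the integers gives $f(s) > f(s)$, which is absurd. Hence $\neg \phi(s)$.

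I do not anticipate any real obstacle: the result is essentially the observation that an accepting cycle would force a strict decrease of an integer-valued function around a loop. The only point to be mildly careful about is handling the degenerate case of a one-edge cycle $s \to s$ uniformly (there, clause~(2) alone already gives $f(s) > f(s)$), and keeping the notation consistent with the statement's pattern $\steps{s}{s_1 \to s_2}{s}$, which I read as an arbitrary-length cycle from $s$ back to $s$.
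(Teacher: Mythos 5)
Your proof is correct and matches the paper's argument: both assume $\phi(s)$, use clause~(2) on the first edge to get $f(s) > f(s_1)$, chain clause~(1) along the rest of the cycle to get $f(s_1) \ge f(s)$ (the paper phrases this as induction on the path definition), and derive the contradiction $f(s) > f(s)$.
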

These certificates are also complete:
\begin{proposition}
    \label{topo-numberings-complete}
    If there is no path $\steps{s}{s_1 \to s_2}{s}$ with $\phi(s)$ in $\to$,
    then the following are \fw s for $\to$.
    \begin{enumerate}
        \item The number of accepting states that are reachable from a node: $f(s) := |\{x \,|\, \reaches{s}{x} \wedge \phi(x)\}|$ (assuming $\{x \,|\, \reaches{s}{x} \wedge \phi(x)\}$ is finite for any $s$).
        \item If $h$ is a topological numbering (in the classical sense) of the strongly connected components (SCCs) of $\to$,
        then set $g(s) := h(C)$ if $s \in C$.
    \end{enumerate}
\end{proposition}
We now lift this idea to the case of (restricted) reachability invariants.
\begin{definition}
    \label{def:restricted-forward-numbering}
    Given an SSTS $\ts$, an accepting state predicate $\phi$,
    and a corresponding restricted reachability invariant $\rI$,
    a \emph{restricted topological numbering} of $\ts$
    is a function $f$ with an integer range such that:
    \begin{enumerate}
        \item For all $s, t' \in I$ and $t \in S$, if $s \to t$, and $t \tsse t'$, then $f(s) \geq f(t')$.
        \item For all $s, t' \in I$ and $t \in S$, if $s \to t$, $t \tsse t'$, and $\phi(s)$, then $f(s) > f(t')$.
    \end{enumerate}
    Moreover, let $\tosep$ be the restriction of $\tose$ to $I$, \ie\ the transition system such that $s \tosep t'$
    iff $s, t' \in I$ and there exists a $t$ such that $s \to t$ and $t \tsse t'$.
\end{definition}
Now, $f$ is clearly a \fw\ for $\tosep$. Thus $\tosep$ is free of accepting cycles.
Additionally, the transition system $\tosep$ trivially simulates $\tose$ with $s \Sim s'$ iff $s' = s$ and $s \in I$.
Therefore, any accepting cycle $s \tose^+ s$ in $\tose$ with $s \in I$ and $\phi(s)$ yields an accepting cycle $s \tosep^+ s$. Hence $\tose$ is free of accepting cycles.

From this, we conclude our main theorem that allows one to certify absence of B\"uchi runs in a transition system $\to$.
\begin{theorem}
    \label{thm:no-buechi-run}
    Let $f$ be a restricted \fw\ of $\ts$ for a compatible predicate $\phi$ and a finite restricted reachability invariant $\rI$.
    Then, for any initial state $s_0 \in S$ with $s_0 \tsse t_0$ for $t_0 \in I$,
    there is no B\"uchi run from $s_0$.
\end{theorem}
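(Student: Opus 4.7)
The plan is to prove this by contradiction: assume a Büchi run $\run{s_0}{s_1}$ exists in $\to$, and derive that $\tosep$ must contain an accepting cycle, contradicting the existence of the topological numbering $f$.

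First, I would apply Theorem \ref{thm:inv-run-simulation} to the assumed Büchi run, using $s_0 \tsse t_0 \in I$ as the starting point. This yields a run $\runse{t_0}{t_1}$ in $\tose$ such that $s_i \tsse t_i$ and $t_i \in I$ for all $i$. Since $\tsse \subseteq \tsle$ by Definition \ref{def:restricted-reachability-invariant}, and $\phi$ is compatible with $\tsle$, each $t_i$ with $\phi(s_i)$ satisfies $\phi(t_i)$ as well. Hence $(t_i)_i$ is an accepting Büchi run in $\tose$ that lives entirely in $I$, which means it is an accepting Büchi run in the restricted system $\tosep$ (by the definition of $\tosep$, every step $t_i \tose t_{i+1}$ with both endpoints in $I$ is actually a step $t_i \tosep t_{i+1}$).

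Next, I would use finiteness of $I$. Since the Büchi run visits infinitely many accepting states but only finitely many distinct states of $I$, at least one accepting state $s \in I$ with $\phi(s)$ occurs at two different positions $i < j$ of the run. The segment $t_i \tosep t_{i+1} \tosep \ldots \tosep t_j$ is then a cycle of the form $\steps{s}{\ldots}{s}$ in $\tosep$ with $\phi(s)$.

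Finally, I would observe that $f$ is a topological numbering for $\tosep$ and $\phi$ in the sense of Definition \ref{def:forward-numbering}: condition (1) of Definition \ref{def:restricted-forward-numbering} gives condition (1) of Definition \ref{def:forward-numbering} for $\tosep$, and likewise for condition (2). Proposition \ref{thm:topo-numbering-no-accepting-cycle} then directly contradicts the existence of the accepting cycle constructed above. The main conceptual obstacle is bookkeeping the interplay between $\tose$ and its restriction $\tosep$, together with ensuring that compatibility of $\phi$ transports the accepting condition from the original run to the simulating run; once these are spelled out, the contradiction with Proposition \ref{thm:topo-numbering-no-accepting-cycle} is immediate. \qed
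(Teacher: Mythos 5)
Your proof is correct and follows essentially the same route as the paper's: contradiction via Theorem~\ref{thm:inv-run-simulation}, transfer of acceptance through compatibility of $\phi$, a pigeonhole argument on the finite invariant $I$ to obtain an accepting cycle, and a contradiction with Proposition~\ref{thm:topo-numbering-no-accepting-cycle}. The only difference is presentational: you spell out the passage from $\tose$ to its restriction $\tosep$ inside the proof, whereas the paper dispatches that bookkeeping in the discussion following Definition~\ref{def:restricted-forward-numbering}.
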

In practice, a certificate can now be given as a finite restricted reachability invariant $I$ as described above,
and a corresponding restricted topological numbering $f$. Both properties can be checked locally for each individual state
in $I$.

\subsection{Instantiation for Timed Automata}
We now want to instantiate this abstract certification framework for the concrete case
of TBA.
Our goal is to certify that the zone graph $\zgto$ does not contain any B\"uchi runs.
As the zone graph is complete, this implies that the underlying TBA is empty.
Thus we set $\to \;:=\; \zgto$.
Subsumptions in the zone graph shall correspond to the self-simulation relation of the SSTS.
Hence we define $\tsle$ such that $(q, Z) \tsle (q', Z')$ iff $q' = q$ and $Z \subseteq Z'$.

To certify unreachability, it is sufficient to consider arbitrary subsumptions in the zone graph,
\ie\ $\tsse \;:=\; \tsle$ \cite{Tacas20}. In other words it is sufficient to check that
the given certificate $I$ is a reachability invariant for $\ts$.
We have not yet given the set of states $S$.
Abstractly,
$S$ is simply the set of non-empty states,
\ie\ $S \;:=\, \{(q, Z) \,|\, Z \neq \emptyset\}$.
If it was allowed to reach empty zones, then soundness of the zone graph would not be given.
In practice, the certifier needs to be able to compute $\zgto$ effectively,
typically using the DBM representation of zones.
To this end one wants to add the assumption on states that
all DBMs are in canonical form.
One needs to ensure that
states are split according to $\phi$, \ie\ $\forall (q, Z) \in S.\, Z \subseteq \Phi(q) \vee Z \cap \Phi(q) = \emptyset$ where $\Phi(q) = \{v \,|\, \phi(q, v)\}$.
This is trivial for commonly used properties that concern only the finite state part.

Following these considerations, we propose the following certifier for the emptiness of TBA.
A certificate $C$ is a set of triplets $(q, Z, i)$ where $q$ is a discrete state,
$Z$ is a corresponding zone, and $i$ is the topological number for $(q, Z)$.
The certifier runs Algorithm \ref{alg:buechi-emptiness} on this certificate.
The algorithm extends the one by Wimmer and Mutius \cite{Tacas20} with
the topological numbers for liveness checking.

\begin{algorithm}
    \caption{Certifier for the emptiness of TBA}
\label{alg:buechi-emptiness}

                
\begin{algorithmic}[1]
    \Procedure{B\"uchi-Emptiness}{$\phi, C, q_0$}
    \ForAll {$(q, Z, i) \in C$}
    \Comment{\parbox[t]{.34\linewidth}{All DBMs are well-formed}}
        \IfNoThen{$Z = \emptyset \vee Z\; \text{is not canonical}$}
            \rejectcert
    \EndFor

    \IfNoThen {$\nexists (q_0, Z_0, i) \in C.\, \{\vali\} \subseteq Z_0$}
    \Comment{\parbox[t]{.34\linewidth}{The initial state is covered}}
        \rejectcert
    \ForAll {$(q, Z, i) \in C$}
    \Comment{\parbox[t]{.34\linewidth}{The certificate is:}}
        \ForAll {$(q_1, Z_1)$ s.t.\ $(q, Z) \zgto (q_1, Z_1)$}
            \IfNoThen{{$(\nexists (q_1, Z_1', j) \in C.\, Z_1 \subseteq Z_1'$
            \Comment{\parbox[t]{.34\linewidth}{an invariant,}}
                \par\hskip\algorithmicindent\hskip\algorithmicindent\hskip\algorithmicindent
                $\wedge\, (\phi(q) \implies i > j) \wedge i \geq j)$}}
                \Comment{\parbox[t]{.34\linewidth}{and a topological numbering}}
                \ThenIndent{\textbf{reject certificate}}
            
        \EndFor
    \EndFor
    \acceptcert
    \EndProcedure
\end{algorithmic}
\end{algorithm}

\begin{theorem}
    \label{thm:certificate-checker-correct}
    If $\textsc{B\"uchi-Emptiness}(\phi, C, q_0)$ accepts the certificate,
    then $\dbmto$ has no B\"uchi run for $\phi$. Consequently, the underlying TBA is empty.
\end{theorem}
The proof constructs a suitable $\tsse$ such that $(q,Z)\tsse(q,Z')$ if $Z\subseteq Z'$ and
$(q,Z',k)\in C$, where $k$ is selected to be minimal. 
Setting $I := \set{(q, Z) \,|\, \exists i.\, (q, Z, i) \in C)}$ and $f(q, Z) := \textsf{min}\{i\mid (q, Z, i) \in C\}$, Theorem~\ref{thm:no-buechi-run} can be applied.

The algorithm inherits several beneficial properties from \cite{Tacas20}.
First, it can easily be parallelized.
Most importantly however, the certifier does not need to compute an abstraction operation $\alpha$.
Suppose the model checker starts with a state $(q_0, \{\vali\})$ and explores the transition
$(q_0, \{\vali\}) \zgto (q_1, Z_1)$.
The model checker could then abstract zone $Z_1$  to $\alpha(Z_1)$,
and explore more edges from $(q_1, \alpha(Z_1))$,
\eg\ $(q_1, \alpha(Z_1)) \zgto (q_2, Z_2)$.
The certificate just needs to include $(q_0, \{\vali\})$, $(q_1, \alpha(Z_1))$,
and $(q_2, \alpha(Z_2))$, and the certificate checker just needs to check the following inclusions:
$\{\vali\} \subseteq \{\vali\}$, $Z_1 \subseteq \alpha(Z_1)$, and $Z_2 \subseteq \alpha(Z_2)$.
The checker does not need to compute $\alpha$
as $\alpha(Z_1)$ and $\alpha(Z_2)$ are part of the certificate.


It is rather easy to see that these certificates are also complete for timed automata.
For any finite abstraction $\extra$,
the abstracted zone graph $\zgtoalpha$ is finite and complete.
Thus, for a starting state $(q_0, \{\vali\})$ the set
\[I := \{(q, Z) \,|\, (q_0, \{\vali\}) \zgtoalpha^* (q, Z)\}\]
is a trivial finite reachability invariant that can be computed effectively
for common abstractions $\alpha$.
Moreover, if the underlying TBA is empty,
then $\zgtoalpha$ cannot contain a B\"uchi run either, since the abstract zone graph is complete.
Because $\zgtoalpha$ is finite, this means it cannot contain a cycle
through $\phi$.
Hence a forward numbering of $I$ can be given by computing the strongly connected
components of $I$.
However, this type of certificate is not of practical interest
as subsumptions are not considered. 
How certificates can be obtained for model
checking algorithms that make use of subsumption is the topic of Section
\ref{sec:algorithms}. 

\section{Incorporating Advanced Abstraction Techniques}
\label{sec:incorporating_state_of_the_art_abstraction}
We have already discussed that the techniques that were presented above are in principle agnostic
to the concrete abstraction $\alpha$ used.
This, however, is only true for standard verification algorithms for T(B)A
that use zone inclusion $Z \subseteq Z'$ as a simulation relation on the abstract zone graph.
There is also the noteworthy abstraction $\alphalu$ \cite{Behrmann:STTT:2006},
which is the coarsest zone abstraction that can be defined from clock bounds $L,U$~\cite{better-abstractions}.
Herbreteau \etal\ have shown that even though $\abstractionLU(Z)$ is usually not a zone,
it can be checked whether $Z \subseteq \abstractionLU(Z')$ directly from the \dbm\ representation of $Z$ and $Z'$,
without computing $\abstractionLU(Z')$~\cite{better-abstractions}. 
Hence, one can use $\abstractionLU$-subsumption over zones,
$Z \subseteq \abstractionLU(Z')$, instead of standard inclusion $Z \subseteq Z'$
to explore fewer symbolic states.
This technique can also be integrated with our certification approach.
This time, we will need more knowledge about the concrete abstraction $\alpha$, however.


We first describe the concept of time-abstract simulations, on which the definition
of $\alphalu$ is based.
\begin{definition}
    \label{def:time-abstract-simulation}
    A time-abstract simulation between clock valuations is a quasi-order $\tasim$ such that
    if $v \tasim v'$ and $\tastep{A}{(q, v)}{\delta}{t}{(q_1, v_1)}$ then
    there exist $\delta'$ and $v_1'$ such that
    $\tastep{A}{(q, v')}{{\delta'}}{t}{(q_1, v_1')} \wedge v_1 \tasim v_1'$.
\end{definition}
Behrmann \etal\ defined the simulation $\simlu$ based on the clock bounds $L$ and $U$,
and showed that it is a time-abstract simulation
\cite{Behrmann:STTT:2006}
(in fact one can show that $\simlu$ is even a simulation, \ie\ $\delta' = \delta$).
For any $\tasim$, one can define the corresponding abstraction
$\alphasim(Z) = \{v \,|\, \exists v' \in Z.\, v \tasim v'\}$.
This yields a sound and complete abstraction for any time-abstract simulation $\tasim$
\cite{Behrmann:STTT:2006}.
Observe that $\alphasim(Z)$ is the set of all valuations that are simulated by a valuation in $Z$ w.r.t. $\tasim$. As a result, every sequence of transitions feasible from $\alphasim(Z)$ is also feasible from $Z$ (although with different delays).

The implicit abstraction technique based on the
subsumption check $Z \subseteq \alphasim(Z')$ is compatible with our certification
approach for any $\alphasim$ for which $\tasim$ is a time-abstract simulation,
and in particular $\alphalu$.
Actually, we are still able to use algorithm \textsc{B\"uchi-Emptiness}
with the only modification that the condition $Z_1 \subseteq Z_1'$
is replaced with $Z_1 \subseteq \alphasim(Z_1')$.
We will justify this by showing that if the algorithm accepts the certificate,
then it represents a restricted reachability invariant with a suitable topological numbering
for $\zgalphasim$.
This means that $\zgalphasim$ does not have a B\"uchi run (Theorem \ref{thm:no-buechi-run}),
which, as $\alphasim$ is a complete abstraction, implies that the underlying TBA does not have a B\"uchi run either.

We first prove the following monotonicity property
(which can be seen as a generalization of Lemma 4 in the work of Herbreteau \etal\ \cite{better-abstractions}).
\begin{proposition}
    \label{thm:zg-mono}
    Let $\tasim$ be a time-abstract simulation.
    If $\alphasim(W) \subseteq \alphasim(W')$, $(q, W) \zgto^t (q_1, W_1)$, and $(q, W') \zgto^t (q_1, W_1')$,
    then $\alphasim(W_1) \subseteq \alphasim(W_1')$.
\end{proposition}
\newcommand{\siminv}{\sqsupseteq}
Reminding ourselves that $\alphasim$ is idempotent, if follows that if
$(q, W) \zgto^t (q_1, W_1)$ and $(q, \alphasim(W)) \zgto^t (q_1, W_1')$ for some states $q$, $q_1$, and sets of valuations $W$, $W_1$, and $W_1'$,
then $\alphasim(W_1) = \alphasim(W_1')$.
In other words, $\zgto$ simulates $\zgalphasim$ for $\siminv$ defined as $(q, W) \siminv (q, Z) \iff W = \alphasim(Z)$.

Now, we show that the conditions of Definitions \ref{def:restricted-reachability-invariant}
and \ref{def:restricted-forward-numbering} can be transferred along this simulation.
\begin{theorem}
    \label{thm:topo-numbering-simulation}
    Assume that the following conditions hold:
    \begin{enumerate}
        \item For all states $q$, and zones $Z$, $Z'$, $Z''$, if $(q, Z) \tsse (q, Z')$, then $Z \subseteq \alphasim(Z')$. Moreover, if $\alphasim(Z) = \alphasim(Z')$ and $(q, Z) \tsse (q, Z'')$,
        then $(q, Z') \tsse (q, Z'')$.
        \item For all $q$, $Z$, if $\phi((q, \alphasim(Z)))$, then $\phi((q, Z))$.
        \item $\rI$ satisfies condition (1) of Definition \ref{def:restricted-reachability-invariant}
        for $\zgto$.
        \item $f$ is a restricted topological numbering for $\zgto$, $\rI$, and $\phi$.
    \end{enumerate}
    Let $(q, W) \tsse' (q, W') \iff \exists Z, Z'.\, W = \alphasim(Z) \wedge W' = \alphasim(Z') \wedge (q, Z) \tsse (q, Z')$,
    $I' := \set{s' \,|\, \exists s \in I.\, s \sim s'}$
    and $f'(s') := \mathit{Min}\, \set{f(s) \,|\, s \in I \wedge s \sim s'}$. Then
    \begin{enumerate}
        \item $(I', \tsse')$ is a restricted reachability invariant for $(\zgalphasim, \subseteq)$.
        \item $f'$ is a restricted topological numbering for $\zgalphasim$, $(I', \tsse')$, and $\phi$.
    \end{enumerate}
\end{theorem}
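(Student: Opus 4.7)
The plan is to transfer both the invariant condition and the numbering condition from the concrete zone graph $\zgto$ to the abstracted graph $\zgalphasim$ via a simulation that lifts abstract transitions back to concrete ones. The central tool is a lifting lemma: if $(q, Z) \in I$ with $\alphasim(Z) = W$ and $(q, W) \zgalphasim^t (q_1, W_1)$, then there is a non-empty $Z_1$ with $(q, Z) \zgto^t (q_1, Z_1)$ and $\alphasim(Z_1) = W_1$. Existence of the lifted transition follows from the time-abstract simulation property of $\tasim$: any valuation $v \in W = \alphasim(Z)$ that can take $t$ has a companion $v' \in Z$ with $v \tasim v'$ that also takes $t$. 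The equality of $\alphasim$-images is exactly Proposition~\ref{thm:zg-mono}, applied in both directions to the parallel $\zgto$-steps from $Z$ and from $\alphasim(Z)$.

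For part (1), the second clause of Definition~\ref{def:restricted-reachability-invariant} (namely, that $\tsse'$ implies zone inclusion) is direct: unpack $(q, W) \tsse' (q, W')$ into $(q, A) \tsse (q, A')$ with $W = \alphasim(A)$ and $W' = \alphasim(A')$; by hypothesis (1), $A \subseteq \alphasim(A') = W'$, so monotonicity and idempotence of $\alphasim$ give $W \subseteq W'$. For the invariant clause, given $s = (q, W) \in I'$ with concrete witness $(q, Z) \in I$ and a step $s \zgalphasim (q_1, W_1)$, I would invoke the lifting lemma to get $(q, Z) \zgto (q_1, Z_1)$ with $\alphasim(Z_1) = W_1$. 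Hypothesis (3) supplies $(q_1, Y) \in I$ with $(q_1, Z_1) \tsse (q_1, Y)$; setting $t' := (q_1, \alphasim(Y)) \in I'$, the existential defining $\tsse'$ is witnessed by $Z_1$ and $Y$, so $(q_1, W_1) \tsse' t'$.

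For part (2), both clauses of Definition~\ref{def:restricted-forward-numbering} transfer by the same recipe. Given $s, t' \in I'$, an abstract state $t$, $s \zgalphasim t$, $t \tsse' t'$, and optionally $\phi(s)$, I would pick a minimizer $x^* = (q, Z_s) \in I$ with $\alphasim(Z_s) = W$ realizing $f(x^*) = f'(s)$, apply the lifting lemma to obtain $(q, Z_s) \zgto (q_1, Z_1)$ with $\alphasim(Z_1) = W_1$, and unpack $t \tsse' t'$ into $(q_1, A) \tsse (q_1, A')$ with $\alphasim(A) = W_1$ and $\alphasim(A') = W_1'$. The second clause of hypothesis (1), applied with $\alphasim(Z_1) = \alphasim(A)$, yields $(q_1, Z_1) \tsse (q_1, A')$; combined with hypothesis (3) and the existence of a concrete witness for $t' \in I'$, this produces $(q_1, Y^*) \in I$ with $\alphasim(Y^*) = W_1'$ and $(q_1, Z_1) \tsse (q_1, Y^*)$, so the $f$-numbering for $\zgto$ gives $f(x^*) \geq f(Y^*) \geq f'(t')$. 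Strictness when $\phi(s)$ holds is inherited from hypothesis (2), which converts $\phi((q, \alphasim(Z_s))) = \phi(s)$ to $\phi((q, Z_s))$ and thereby triggers the strict clause of the concrete numbering.

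The main obstacle lies in producing the aligned witness $(q_1, Y^*) \in I$ in part (2): the $A'$ from the unpacked $\tsse'$ need not lie in $I$, while a canonical concrete witness $Z^* \in I$ of $t' \in I'$ need not be $\tsse$-dominated by $Z_1$ from the abstract hypotheses alone. In the TA instantiation of Section~\ref{sec:certificates_buchi_emptiness}, this gap is closed because the certifier's construction makes $\tsse$-targets always lie in $I$, so $A'$ and $Z^*$ become interchangeable via the source-swap clause of hypothesis (1); this is where the concrete shape of $\tsse$ produced by the algorithm becomes essential, rather than purely the four generic hypotheses.
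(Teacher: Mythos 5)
Your proposal follows essentially the same route as the paper's proof: your ``lifting lemma'' is exactly the simulation of $\zgalphasim$ by $\zgto$ that the paper derives from Proposition~\ref{thm:zg-mono} together with idempotence of $\alphasim$, and both parts are transferred along it by choosing the $f$-minimizing concrete representative of $s'$ and using the source-swap clause of hypothesis~(1) to realign the lifted successor with the witness hidden in the definition of $\tsse'$. The obstacle you flag in part~(2) is real, and you have diagnosed it correctly: from the four hypotheses as literally stated one only obtains $(q_1,Z_1) \tsse (q_1,A')$ for the existential witness $A'$, whereas Definition~\ref{def:restricted-forward-numbering} only yields $f(x^*) \geq f(\cdot)$ when the $\tsse$-target lies in $I$, and neither $A'$ nor an arbitrary concrete witness of $t' \in I'$ is guaranteed to do so. The paper closes this gap by proving, in the appendix, a more abstract version of the theorem whose condition~(2) is strictly stronger than anything derivable from the main-text hypotheses --- it requires that \emph{any} concrete representatives $t \sim t'$ and $r \sim r'$ of $\tsse'$-related abstract states satisfy $t \tsse r$ --- and then discharging that condition in the TA instantiation using the concrete relation $\tsse^*$ from the proof of Theorem~\ref{thm:certificate-checker-correct}, whose targets are by construction certificate entries (hence in $I$) and which is deterministic up to $\alphasim$-equality of sources. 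The footnote attached to that instantiation, and the appendix's disclaimer that some statements are restated ``slightly more precisely,'' confirm that the main-text hypotheses are an abbreviation; your argument is the correct one once this additional property of $\tsse$ is made explicit, so the discrepancy lies in the statement rather than in your proof strategy.
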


Algorithm $\textsc{B\"uchi-Emptiness}$ ensures that there exist an invariant $\rI$
and a numbering $f$
that fulfill the conditions of Theorem \ref{thm:topo-numbering-simulation} for $\zgto$
(as indicated after Theorem \ref{thm:certificate-checker-correct}).
Thus, if the algorithm accepts the certificate,
there is a  restricted reachability invariant $(I', \tsse')$ with a corresponding topological
numbering $f'$ for $\zgalphasim$. Hence $\zgalphasim$ does not have a B\"uchi run.

\section{Evaluation}
\label{sec:evaluation}
In this section, we first give a brief description of the model checking algorithms
we consider and describe how certificates can be extracted from them.
Then, we outline the general architecture of our certification tool chain,
and finally we present some experiments on standard TA models.

\subsection{Extracting Certificates From Model Checkers}
\label{sec:algorithms}

We consider the two state-of-the-art algorithms for checking
B\"uchi emptiness for TA: the NDFS-based algorithm by Laarman \etal\ \cite{multicore_ta_ltl} and the iterative SCC-based algorithm by Herbreteau \etal\ \cite{Herbreteau2016}.
Both algorithms can be applied to any abstracted zone graph $\zgtoalpha$ for a finite, sound and complete
abstraction $\alpha$.
As was noted by Herbreteau \etal\ \cite{Herbreteau2016}, they also have in common that their correctness can be justified 
on the basis that they both compute subsumption graphs that are \emph{liveness compatible}, in the sense that they do not contain any cycle with an accepting node and a subsumption edge.

Considering NDFS for TA from~\cite{multicore_ta_ltl} more closely,
it prunes the search space by using subsumption in certain safe places. In particular,
the outer (blue) search is pruned when it reaches a state $s$ that is subsumed 
by a state on which the inner (red) search has been called, 
i.e.~$s\sqsubseteq t$ and $t$ is red. In order to generate a liveness-compatible subsumption
graph, the blue search exports all the states which are not subsumed along with their $\to$-successors. Moreover, 
the algorithm exports $\tosubsum$-edges as soon as the pruning by subsumption is applied. 

The iterative algorithm from~\cite{Herbreteau2016} interleaves reachability analysis and SCC decompositions. The reachability analysis computes a subsumption graph with maximal subsumption: a subsumption edge $s \tosubsum t'$
is added whenever a new state $t$ is visited from $s$, and $t$ is subsumed by some visited state $t'$. The resulting graph $\to \cup \tosubsum$ is a subsumption graph that preserves state reachability, but that may not be liveness compatible. Therefore, an SCC decomposition is run, and all subsumption edges from SCCs that contain both an accepting node and a subsumption edge are removed. States which are not subsumed anymore are re-explored in the next iteration of the main loop. 
Upon termination, the subsumption graph $\to \cup \tosubsum$ is liveness compatible.

Both algorithms compute liveness compatible subsumption graphs. In order to obtain a certificate we run one extra SCC decomposition of the graph with ${\to}\cup{\tosubsum}$-edges from which we compute a topological ordering.

\subsection{General Architecture}
\begin{figure}[b]
\includegraphics[width=\textwidth]{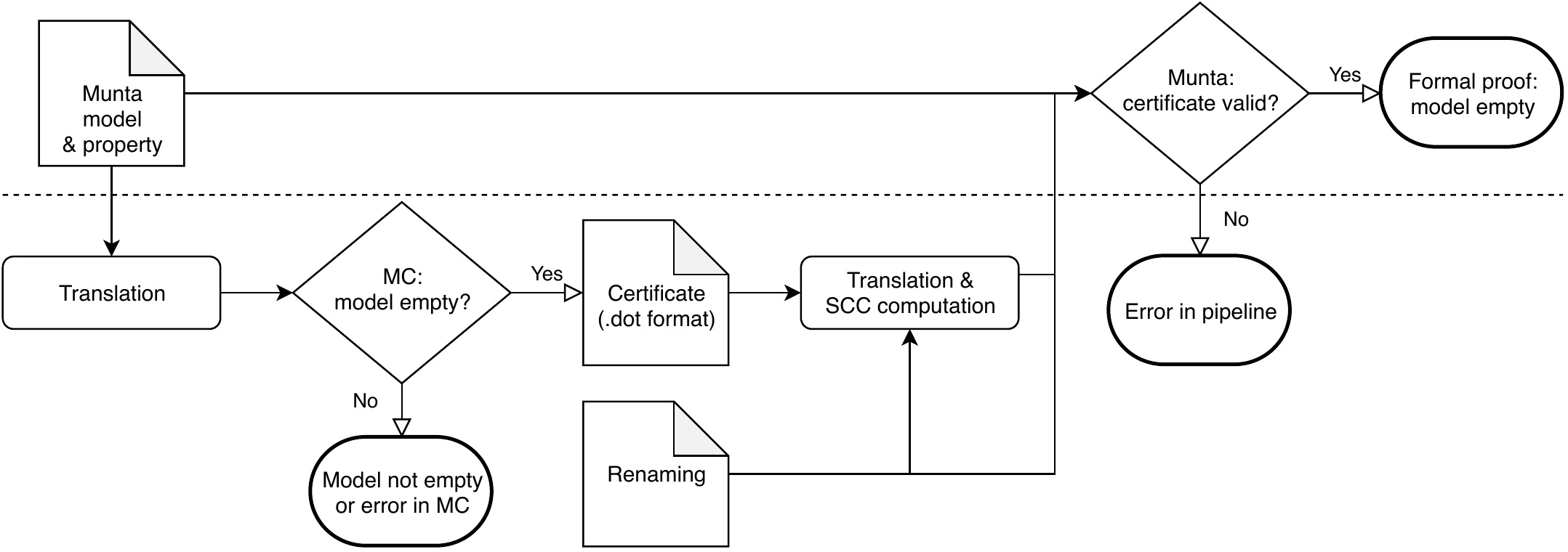}
\caption{Workflow of the certifier pipeline. The dashed line is the trust boundary.
If the correct model is given, then the answer on the right can be trusted.}
\label{fig:pipeline}
\end{figure}
Our certifier is implemented as an extension of the tool Munta \cite{munta},
which has been fully verified in Isabelle/HOL \cite{Tacas18,Tacas20}.
Figure \ref{fig:pipeline} depicts the architecture of our tool chain to certify the
emptiness of a given TBA.
The model (a TBA) and the acceptance property are given in the input format of Munta.
For the model checker in the middle, we used Imitator and TChecker.
In a first step, the Munta model is translated to an input model for
the model checker. 
The model checker decides whether the given TBA is empty.
If not, then either the model checker's answer is correct or it has found
a spurious counterexample; in both cases no certificate can be extracted.
Otherwise, the model checker emits a certificate consisting of a number of symbolic states and the set of edges in the subsumption graph.
The latter can either include 
proper transitions ($\to$) and subsumptions (\tosubsum) (this is done for Imitator with NDFS and subsumption),
or the edges that merge these two types (\tosubsum')
(which is done for TChecker and for Imitator with state merging enabled, see Section~\ref{sec:Experiments}).
In either case, in the next step where the certificate is translated to Munta's
binary input format for certificates, 
the SCC numbers (c.f.\, Proposition \ref{topo-numberings-complete}) are re-computed blindly
from these edges.
This step additionally makes use of a renaming dictionary to map from
human readable labels for states, actions, \etc, to natural numbers.

Finally, the TBA model, the translated certificate, and the renaming are given to Munta.
If it accepts the certificate, then there is an Isabelle/HOL theorem that
guarantees that the given TBA is indeed empty.
If the certificate is rejected, any of the steps in the tool chain
could have failed.
Note that the basis of trust 
is minimal.
One just needs to ensure that the model represents what one has in mind,
and to trust the correctness of Munta.
To trust Munta, one essentially needs to trust its TBA semantics,
which is less than 200 lines long, some core parts of Isabelle/HOL,
and an SML compiler (MLton in our case).
For details, we refer the interested reader to previous publications on Munta \cite{munta,Tacas20}.

\subsection{Experiments}
\label{sec:Experiments}

We have evaluated our approach on the TBA models that were also used
by Herbreteau \etal\ \cite{Herbreteau2016}.
These are inspired by standard TA benchmarks,
and all consist of the product of a TA model
and an additional B\"uchi automaton that encodes the complement
of the language of a given LTL formula that one wants to check.
Details are given by Herbreteau \etal\ \cite{Herbreteau2016}.

For Imitator we tried two methods: NDFS with subsumption and
reachability analyis with merging~\cite{DBLP:conf/rp/AndreS11}. 
Imitator does not apply abstractions
(since it was designed for parametric TA), so the full zone graph is often infinite
and most NDFS runs fail. The one that succeeds generates a valid certificate.
{\em Merging} tries to reduce the 
number of zones, by computing the exact convex hull of zones.
This creates new zones that could subsume several existing ones,
and often yields a finite zone graph. The certificate produced by merging
is always a reachability invariant but not necessarily a subsumption graph.
Merging may introduce spurious cycles,
in which case the certificate is not liveness compatible;
these cases are caught by the Munta certifier. If there are no (spurious) accepting
cycles, we obtain a valid and quite small certificate.
Note that the generalization from subsumption graphs to our certificates is crucial
to allow for merging.

Table \ref{tab:results} summarizes our experimental results.
TChecker was run with the algorithm from \cite{Herbreteau2016} and~\cite{multicore_ta_ltl}
and Imitator with the algorithm from~\cite{multicore_ta_ltl}, and with a reachability procedure with full merging.
The *** entries indicate cases where Imitator did not terminate
within 30s.
The results show that the certifier accepts those certificates that we
expect it to accept, but also rejects those that stem from subsumption graphs
that are not liveness compatible.
Moreover, the certifier was fully verified in Isabelle/HOL and still yields reasonable performance, 
certifying models with more than a 100k symbolic states in under 230s.
\begin{table}
\setlength{\tabcolsep}{3.5pt}
\begin{tabularx}{\textwidth}{l X crr X crr X crr X crr}
    \toprule
    Model & \multicolumn{7}{c}{TChecker} & & \multicolumn{7}{c}{Imitator} \\
    \cmidrule{3-9} \cmidrule{11-17}
    && \multicolumn{3}{c}{Iterative SCC} & & \multicolumn{3}{c}{NDFS} & & \multicolumn{3}{c}{Merge}  & & \multicolumn{3}{c}{NDFS} \\
    \midrule
    CC1 &  & \Y &     57 &   0.01 &  & \Y &   3281 &   0.06 &  & \Y &     58 &   0.01 &  & \multicolumn{3}{c}{***} \\
    CC4 &  & \Y & 195858 & 221.56 &  & \Y &  32575 &   7.75 &  & \multicolumn{3}{c}{***} &  & \multicolumn{3}{c}{***} \\
    CC5 &  & \Y &  65639 &  30.63 &  & \Y & 143057 & 218.98 &  & \multicolumn{3}{c}{***} &  & \multicolumn{3}{c}{***} \\
    FD1 &  & \Y &    214 &   0.02 &  & \Y &    677 &   0.03 &  & \N &    294 &   0.02 &  & \Y &   1518 &   0.11 \\
    FI1 &  & \Y &     65 &   0.01 &  & \Y &     71 &   0.00 &  & \Y &    136 &   0.00 &  & \multicolumn{3}{c}{***} \\
    FI2 &  & \Y &    314 &   0.01 &  & \Y &    344 &   0.01 &  & \Y &    589 &   0.01 &  & \multicolumn{3}{c}{***} \\
    FI4 &  & \Y &    204 &   0.00 &  & \Y &    224 &   0.01 &  & \Y &    793 &   0.01 &  & \multicolumn{3}{c}{***} \\
    FI5 &  & \Y &   3091 &   0.13 &  & \Y &   2392 &   0.09 &  & \N &    863 &   0.03 &  & \multicolumn{3}{c}{***}\\
    \bottomrule
\end{tabularx}
\medskip
\caption{Benchmark results on a 2017 MacBook Pro with 16 GB RAM and a Quad-Core Intel Core i7 CPU at 3.1 GHz.
For each algorithm, we show whether the certificate was accepted,
the number of DBMs in the certificate, and the time for certificate checking on a
single core in seconds.
}
\label{tab:results}
\end{table}

\section{Conclusion}
Starting from an abstract theory on self-simulating transition systems,
we have presented an approach to extract certificates from state-of-the-art model checking algorithms
(including state-of-the-art abstraction techniques)
that decide emptiness of timed B\"uchi automata. The certificates prove that a given model is indeed B\"uchi empty.
We have verified the theory and a checker for these certificates in Isabelle/HOL,
using the tool Munta as a basis. We demonstrated that our approach is feasible
by extracting certificates for some standard benchmark models from the tools TChecker and Imitator.
We hope that our work can help to increase confidence in safety-critical systems
that have been verified with timed automata model checkers.
Furthermore, we envision that our tool could help in the organization of future competitions
for such model checkers.

To close, we want to illuminate some potential future directions of research.
First, one is usually not only interested in the emptiness of TBA per se,
but more generally in the question if a TA model satisfies some LTL requirements.
Thus, 
our tool would ideally be combined
with a verified translation from LTL formulas to B\"uchi automata or with a certifier for such a construction.
The former has been realized by the CAVA project \cite{CAVA:2013}, while an avenue towards the latter
is opened by the recent work of Seidl \etal\ \cite{ltl-master-theorem-isabelle}.

Second, Herbreteau \etal\ have developed a technique of computing abstractions for TA
on the fly, starting from very coarse abstractions and refining them as needed
\cite{lazy-abstractions-ta}.
It seems that our approach is in principle compatible with this technique when augmenting
certificates with additional information on the computed abstractions,
whose validity would have to be checked by the certifier.

Third, one could attempt to reduce the size of the certificates.
In one approach, reachability certificates have been compressed
after model checking (c.f.\ \cite{Tacas20}).
On the other hand, model checking algorithms could speculate that the given TBA is empty,
and use this fact to use additional subsumptions to reduce the search space,
while risking to miss accepting runs. However, given the certification step afterwards,
this is of no concern. For instance, one could remove the red search
from the NDFS algorithm, and use subsumption on blue nodes instead of red nodes,
as a quick pre-check. If the result passes the certifier, we are done.

Finally, 
as our theory is not specific to timed automata per se,
it could be interesting to find other application domains for this approach to certification.
In light of the large body of existing work on well-structured transition systems,
this looks particularly promising as any such system is also self-simulating.

\bibliographystyle{splncs04}
\bibliography{references}

\clearpage\section*{Appendix}

The appendix restates all propositions of the paper together with their proofs.
For coherence, some propositions are stated slightly more precisely than in the paper.

\setcounter{lemma}{0}
\setcounter{proposition}{0}
\setcounter{theorem}{0}

\begin{proposition}
    \label{thm:simulation}
    If $\steps{s_1}{s_2}{s_n}\,(\to \ldots)$ is a path (run) and $s_1 \sim t_1$,
    then there is a path (run) $\stepsprime{t_1}{t_2}{t_n}\,(\to \ldots)$ with $s_i \sim t_i$
    for all $i$.
\end{proposition}

\begin{theorem}
    Let $I$ be an invariant and $\phi$ be a compatible predicate for an SSTS $\ts$.
    If $\forall s \in I.\; \neg \phi(s)$,
    then for all $s_0 \in S$ for which there exists an $s_0' \in I$ with $s_0 \tsle s_0'$,
    there is no accepting path $\steps{s_0}{s_1}{s_n}$ with $\phi(s_n)$.
\end{theorem}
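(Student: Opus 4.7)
The plan is to derive a contradiction by lifting any purported accepting path in $\to$ to a corresponding path in the refined transition system $\tole$ that stays inside $I$ and is component-wise dominated, with respect to $\tsle$, by a witness in $I$. Concretely, I would first establish the unrestricted analog of Theorem~\ref{thm:inv-run-simulation} that is alluded to right after it: if $s_0 \tsle t_0$ with $t_0 \in I$ and $\steps{s_0}{s_1}{s_n}$ is a path in $\to$, then there is a path $\stepsle{t_0}{t_1}{t_n}$ with $t_i \in I$ and $s_i \tsle t_i$ for all $i$.

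This simulation lemma proceeds by induction on $n$. The base case is trivial. For the step, suppose $s_i \tsle t_i$ with $t_i \in I$ and $s_i \to s_{i+1}$. By the self-simulation property of the SSTS applied to $s_i \tsle t_i$, there exists $u$ such that $t_i \to u$ and $s_{i+1} \tsle u$. Now invoke the reachability invariant property at $t_i \in I$, $t_i \to u$ to obtain $t_{i+1} \in I$ with $u \tsle t_{i+1}$. By transitivity of $\tsle$ we get $s_{i+1} \tsle t_{i+1}$, and by the definition of $\tole$ (the unrestricted analog of $\tose$ from Definition on $\tose$) we obtain $t_i \tole t_{i+1}$. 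This completes the induction.

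To conclude the theorem, assume for contradiction that $\steps{s_0}{s_1}{s_n}$ is an accepting path, i.e.\ $\phi(s_n)$. Apply the simulation lemma starting from $s_0 \tsle s_0'$ with $s_0' \in I$. This yields $t_n \in I$ with $s_n \tsle t_n$. Compatibility of $\phi$ with $\tsle$ (the definition of compatibility demands exactly that $\phi(s) \wedge s \tsle s' \implies \phi(s')$) gives $\phi(t_n)$, which together with $t_n \in I$ contradicts the hypothesis $\forall s \in I.\; \neg\phi(s)$.

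The main obstacle is organizing the inductive simulation lemma correctly, in particular chaining self-simulation, invariant closure, and transitivity of $\tsle$ so that one $\to$-step on the $s$-side corresponds to exactly one $\tole$-step on the $t$-side whose target lies in $I$. Once this bookkeeping is in place, the remainder is a two-step argument using only compatibility of $\phi$ and the hypothesis that $I$ is $\phi$-free. No further machinery (well-quasi-orderedness, finiteness, or the topological numbering used later for B\"uchi emptiness) is needed for this reachability certificate.
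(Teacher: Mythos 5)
Your proposal is correct and follows essentially the same route as the paper: the paper also reduces the statement to the $\tole$-version of the invariant simulation lemma (Theorem~\ref{thm:inv-run-simulation}, proved by induction on the path) and then concludes via compatibility of $\phi$ and the $\phi$-freeness of $I$. Your explicit unfolding of the inductive step (self-simulation, then invariant closure, then transitivity of $\tsle$) is exactly the bookkeeping the paper leaves implicit in its one-line induction proof.
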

\begin{proof}
    Assume that there is a
    path $\steps{s_0}{s_1}{s_n}$ with $\phi(s_n)$ in
    $\ts$.
    By the invariant simulation property (Theorem \ref{thm:inv-run-simulation}) we get a path $\stepsle{t_0}{t_1}{t_n}$ with $s_n \tsle t_n$ and $t_n \in I$.
    Because $\phi$ is compatible, we have $\phi(t_n)$.
    However, this contradicts $t_n \in I$.
\end{proof}

We state this theorem again for $\tsle$ instead of $\tsse$ as we did above.
The proofs are the same.

\begin{theorem}
    \label{thm:inv-run-simulation}
    Let $I$ be a reachability invariant for $\ts$.
    Then, for all $s_1 \tsle t_1$ with $t_1 \in I$,
    if $\steps{s_1}{s_2}{s_n}$ is a path,
    then there is a path $\stepsle{t_1}{t_2}{t_n}$ such that $s_i \tsle t_i$
    and $t_i \in I$ for all $i$.
    Similarly, if $\run{s_1}{s_2}$ is a run,
    then there is a run $\runle{t_1}{t_2}$ such that $s_i \tsle t_i$ and $t_i \in I$
    for all $i$.
\end{theorem}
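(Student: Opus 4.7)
The plan is to prove the finite path case by induction on the length $n$, then lift to infinite runs by a standard dependent-choice construction along the given infinite sequence.

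For the finite case, in the base step $n = 1$ the hypothesis $s_1 \tsle t_1$ with $t_1 \in I$ already gives the trivial singleton path in $\tole$. For the inductive step, assume the claim for paths of length $n$ and consider an extension $s_1 \to \ldots \to s_n \to s_{n+1}$. By the induction hypothesis we obtain $t_1 \tole \ldots \tole t_n$ with $s_i \tsle t_i$ and $t_i \in I$ for $i \leq n$. Now apply the self-simulation property of $\ts$ to $s_n \tsle t_n$ and $s_n \to s_{n+1}$: this yields some $t'_{n+1}$ with $t_n \to t'_{n+1}$ and $s_{n+1} \tsle t'_{n+1}$. Since $t_n \in I$, the reachability invariant property gives some $t_{n+1} \in I$ with $t'_{n+1} \tsle t_{n+1}$. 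By transitivity of $\tsle$ we get $s_{n+1} \tsle t_{n+1}$, and unfolding the definition $s \tole u \iff \exists t.\; s \to t \wedge t \tsle u$ we conclude $t_n \tole t_{n+1}$, closing the induction.

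For the run case, I would reuse the same one-step construction but apply it iteratively. Concretely, given $\run{s_1}{s_2}$ with $s_1 \tsle t_1 \in I$, define the sequence $(t_i)$ recursively by dependent choice: having chosen $t_i \in I$ with $s_i \tsle t_i$, use self-simulation to pick some $t'_{i+1}$ with $t_i \to t'_{i+1}$ and $s_{i+1} \tsle t'_{i+1}$, then use the invariant property to pick $t_{i+1} \in I$ with $t'_{i+1} \tsle t_{i+1}$; transitivity gives $s_{i+1} \tsle t_{i+1}$ and the definition of $\tole$ gives $t_i \tole t_{i+1}$.

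The only subtle point in the argument is to observe that transitivity of $\tsle$ is essential in two places: to combine the simulation step with the invariant-closure step so that $s_{i+1} \tsle t_{i+1}$, and implicitly in unfolding $\tole$ back through the invariant. This is precisely why the paper requires $\tsle$ to be a quasi-order rather than merely a reflexive relation. Writing the whole argument crisply is otherwise routine; the only mild obstacle is to keep the bookkeeping of the two auxiliary points $t'_{i+1}$ versus $t_{i+1}$ separate to avoid conflating simulation with invariant closure.
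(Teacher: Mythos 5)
Your proof is correct and follows essentially the same route as the paper: the one-step construction (self-simulation to get $t'_{i+1}$, then the invariant to close back into $I$, then transitivity of $\tsle$) is exactly the argument behind the paper's one-line proof by induction on the path. The only cosmetic difference is that for the infinite case the paper phrases the construction as a coinduction on the definition of runs whereas you unfold it as a dependent-choice recursion, which is the classical counterpart of the same step.
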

\begin{proof}
    By induction (coinduction) on the inductive (coinductive) definition
    of $\steps{s_1}{s_2}{s_n}$ ($\runle{s_1}{s_2}$).
\end{proof}

\begin{proposition}
    Let $f$ be a topological numbering of $\to$.
    If there exists a path of the form $\steps{s}{s_1 \to s_2}{s}$,
    then $\neg \phi(s)$.
\end{proposition}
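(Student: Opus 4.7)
The plan is to argue by contradiction: assume $\phi(s)$ holds and derive a strict inequality that conflicts with the fact that the path returns to $s$. The only tool available is the defining pair of properties of $f$ in Definition~\ref{def:forward-numbering}, namely that $\to$-edges weakly decrease $f$ and that $\to$-edges out of $\phi$-states strictly decrease $f$. Since both facts are purely local to an edge, the proof should just chain them along the given cyclic path.

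Concretely, I would start by writing the hypothetical cycle as $s \to s_1 \to s_2 \to \cdots \to s_k \to s$ (allowing $k=0$, in which case the cycle is $s\to s$). Assuming $\phi(s)$, the first edge $s \to s_1$ falls under clause~(2) of Definition~\ref{def:forward-numbering}, giving $f(s) > f(s_1)$. Every remaining edge on the cycle falls under clause~(1), and by transitivity of $\geq$ one obtains $f(s_1) \geq f(s_2) \geq \cdots \geq f(s_k) \geq f(s)$. Combining these two chains yields $f(s) > f(s)$, which is impossible in an integer-valued range. Hence $\phi(s)$ must fail.

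Formally this is most cleanly stated as a small auxiliary monotonicity lemma: for any path $\steps{u_0}{u_1}{u_m}$, one has $f(u_0) \geq f(u_m)$ (proved by a routine induction on $m$ using clause~(1)). Applying this lemma to the sub-path $s_1 \to \cdots \to s$ gives $f(s_1) \geq f(s)$, and then chaining with clause~(2) on the initial edge finishes the contradiction. There is essentially no obstacle here; the only subtlety is handling the degenerate case where the cycle has length one (i.e.\ $s \to s$), but this is covered directly by clause~(2), which would give $f(s) > f(s)$ immediately.
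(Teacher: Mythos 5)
Your proof is correct and follows exactly the same route as the paper's: assume $\phi(s)$, apply clause~(2) to the first edge to get $f(s) > f(s_1)$, prove $f(s_1) \geq f(s)$ along the remainder of the cycle by induction using clause~(1), and conclude with the contradiction $f(s) > f(s)$. The auxiliary monotonicity lemma and the degenerate-case remark are just explicit spellings of the paper's ``by induction on the path definition'' step.
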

\begin{proof}
    For the sake of contradiction, assume $\phi(s)$.
    From $s \to s_1$, we have $f(s) > f(s_1)$.
    Moreover, from $\steps{s_1}{s_2}{s}$, we have $f(s_1) \ge f(s)$
    by induction on the path definition.
    Together, we arrive at the contradiction $f(s) > f(s)$.
\end{proof}

\begin{proposition}
    \label{topo-numberings-complete}
    If there is no path $\steps{s}{s_1 \to s_2}{s}$ with $\phi(s)$ in $\to$,
    then the following are \fw s for $\to$.
    \begin{enumerate}
        \item The number of accepting states that are reachable from a node: $f(s) := |\{x \,|\, \reaches{s}{x} \wedge \phi(x)\}|$ (assuming $\{x \,|\, \reaches{s}{x} \wedge \phi(x)\}$ is finite for any $s$).
        \item If $h$ is a topological numbering (in the classical sense) of the strongly connected components (SCCs) of $\to$,
        then set $g(s) := h(C)$ if $s \in C$.
    \end{enumerate}
\end{proposition}
\begin{proof}
    \begin{enumerate}
        \item Clearly, if $s \to t$, then $f(s) \geq f(t)$. Suppose $s \to t$, $\phi(s)$,
        and $\{x \,|\, \reaches{s}{x} \wedge \phi(x)\} \subseteq \{x \,|\, \reaches{t}{x} \wedge \phi(x)\}$.
        Then, we have $\reaches{t}{s}$ and thus $\reachesp{s}{s}$. Because of $\phi(s)$, this contradicts
        the assumption and hence $f(s) > f(t)$.
        \item As $h$ is a topological numbering (in the classical sense, \ie\ without condition (2) of definition \ref{def:forward-numbering})
        of the SCCs of $\to$, we have $g(s) \geq g(t)$ if $s \to t$. Moreover, any state $s$ with $\phi(s)$ has to form its own trivial SCC as otherwise
        the SCC of $s$ would form an accepting cycle. Thus $g(s) > g(t)$ if $\phi(s)$ and $s \to t$.
    \end{enumerate}
\end{proof}

\begin{theorem}
    \label{thm:no-buechi-run}
    Let $f$ be a restricted \fw\ of $\ts$ for a compatible $\phi$ and a restricted reachability invariant $\rI$ such
    that $I$ is finite.
    Then, for any initial state $s_0 \in S$ such that there exists a $t_0 \in I$ with $s_0 \tsse t_0$,
    there is no B\"uchi run from $s_0$ in $\to$.
\end{theorem}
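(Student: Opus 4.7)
The plan is to argue by contradiction: assume there is a Büchi run $\run{s_0}{s_1}$ with $\phi(s_i)$ for infinitely many $i$, and derive an accepting cycle in the restricted system $\tosep$, contradicting that $f$ is a topological numbering on $\tosep$.

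First I would apply the invariant simulation theorem (Theorem \ref{thm:inv-run-simulation}, stated for $\tsse$) to lift the run $\run{s_0}{s_1}$ to a run $\runse{t_0}{t_1}$ with $s_i \tsse t_i$ and $t_i \in I$ for all $i$. By compatibility of $\phi$, since $\phi(s_i)$ and $s_i \tsse t_i \implies s_i \tsle t_i$, we get $\phi(t_i)$ whenever $\phi(s_i)$; hence the lifted run is a Büchi run entirely inside $I$. Because each step uses a witness $u_i$ with $t_i \to u_i$ and $u_i \tsse t_{i+1}$ with both $t_i, t_{i+1} \in I$, this is actually a run in the restricted system $\tosep$.

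Next I would use finiteness of $I$: a Büchi run $\runse[t_0]{t_1}$ in the finite set $I$ must visit some state $t$ infinitely often, and between two such visits it must visit an accepting state (since $\phi(t_i)$ for infinitely many $i$). Thus there is a path $\steps{t}{u_1 \tosep u_2}{t}$ in $\tosep$ with $\phi(u_k)$ for some intermediate $u_k$; by relabelling (rotating the cycle so it starts at $u_k$), we obtain a path $\steps{u_k}{v_1 \tosep v_2}{u_k}$ with $\phi(u_k)$ entirely in $\tosep$.

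It remains to observe that $f$ is a topological numbering of $\tosep$ in the classical sense of Definition \ref{def:forward-numbering}: conditions (1) and (2) of Definition \ref{def:restricted-forward-numbering} state exactly that $f(s) \geq f(t')$ and $f(s) > f(t')$ (under $\phi(s)$) whenever $s \tosep t'$, since these steps are witnessed by some intermediate $t$ with $s \to t$ and $t \tsse t'$, with $s, t' \in I$. Proposition \ref{thm:topo-numbering-no-accepting-cycle} then rules out any accepting cycle in $\tosep$, yielding the desired contradiction. The main step requiring care is the lifting via Theorem \ref{thm:inv-run-simulation} together with the compatibility argument for $\phi$ to ensure accepting states are preserved; everything else is a finite-pigeonhole observation and direct application of Proposition \ref{thm:topo-numbering-no-accepting-cycle}.
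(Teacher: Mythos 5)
Your proposal is correct and follows essentially the same route as the paper's proof: contradiction, lifting the run into $I$ via Theorem~\ref{thm:inv-run-simulation}, preserving acceptance by compatibility of $\phi$, extracting an accepting cycle by finiteness of $I$, and contradicting Proposition~\ref{thm:topo-numbering-no-accepting-cycle}. You are merely more explicit than the paper about two points it treats as immediate --- that the lifted run lives in $\tosep$ and that the restricted topological numbering is a classical one for $\tosep$ --- which the paper notes separately after Definition~\ref{def:restricted-forward-numbering}.
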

\begin{proof}
    Working towards a contradiction, suppose that there is an accepting B\"uchi run
    $\run{s_0}{s_1}$.
    By proposition \ref{thm:inv-run-simulation}, there exists a run $\run{t_0}{t_1}$ such that $s_i \tsse t_i$ and $t_i \in I$ for all $i$.
    As there are infinitely many $s_i$ with $\phi(s_i)$, and $\phi$ is compatible with $\ts$,
    there are also infinitely many $t_i$ with $\phi(t_i)$.
    Because $I$ is finite,
    accepting states have to repeat eventually and there is an accepting lasso,
    \ie\ there exists an accepting $t_k$ (with $\phi(t_k)$) such that $t_0 \to^* t_k$ and $t_k \to^+ t_k$.
    This contradicts proposition \ref{thm:topo-numbering-no-accepting-cycle}.
\end{proof}

\begin{theorem}
    \label{thm:certificate-checker-correct}
    If $\textsc{B\"uchi-Emptiness}(\phi, C, q_0)$ accepts the certificate,
    then $\dbmto$ has no B\"uchi run for $\phi$. Consequently, the underlying TBA is empty.
\end{theorem}
The proof will make clear why we needed to define the concept of a restricted reachability invariant
in the first place: to only select certain subsumptions that go downward as far as possible.
\begin{proof}
We define $I := \set{(q, Z) \,|\, \exists i.\, (q, Z, i) \in C)}$.
The whole proof hinges on the fact that we are able to define a suitable
$\tsse$ such that the certificate can form a restricted reachability invariant \mbox{$\rI$}
for $\ts$,
and in addition we can define a suitable numbering $f$ which is a restricted topological numbering
for $\ts$ with respect to $\tose$.
The main idea for this is to only consider subsumptions that go downward as far as possible.
That is, we have $(q, Z) \tsse (q', Z')$ iff $q' = q$, $Z \subseteq Z'$,
and if there exists a $k^*$ such that $(q, Z', k^*) \in C$ and $k^*$ is minimal among all
$k$ for which there exists a $Z''$ with $Z \subseteq Z''$ and $(q, Z'', k) \in C$.
Lines 7-10 of the algorithm ensure that there exists
one such triplet $(q, Z'', k)$ for each $(q, Z) \in I$.
Thus the first condition of definition \ref{def:restricted-reachability-invariant}
is verified. The condition $I \subseteq S$ is ensured by lines 2-4.
The second condition, finally, is trivially met by the definition of $\tsse$.

Now let $f(q, Z)$ be the minimal $i$ such that $(q, Z, i) \in C$.
We need to prove that the conditions of definition \ref{def:restricted-forward-numbering} are met.
Suppose $(q, Z) \in I$, $(q, Z) \zgto (q_1, Z_1)$ and $(q_1, Z_1) \tsse (q_1, Z_1')$.
We know that $(q, Z, f(q, Z)) \in C$ by the definition of $f$.
Lines 7-10 of the algorithm ensure that
there exists a triplet $(q_1, Z', k) \in C$ such that $Z \subseteq Z'$, $f(q, Z) \geq k$,
and $\phi(q) \implies f(q, Z) > k$.
Then, by the definition of $\tsse$, there exists a $k^*$ such that $(q_1, Z_1', k^*) \in C$
and $k \geq k^*$. Thus $k \geq f(q_1, Z_1')$ by the definition of $f$
and both conditions of \ref{def:restricted-forward-numbering} are met.

Finally, together with the fact that $(q_0, \{\vali\})$ is covered by $C$, which is ensured by lines 5-6, we can
invoke theorem \ref{thm:no-buechi-run} to show that $\to$ does not have a B\"uchi run.
Then, by completeness of $\zgto$
the underlying TBA is empty as well.
\end{proof}

\begin{proposition}
    \label{thm:zg-mono}
    Let $\tasim$ be a time-abstract simulation.
    If $\alphasim(W) \subseteq \alphasim(W')$, $(q, W) \zgto^t (q_1, W_1)$, and $(q, W') \zgto^t (q_1, W_1')$,
    then $\alphasim(W_1) \subseteq \alphasim(W_1')$.
\end{proposition}
\begin{proof}
    Suppose $v \in W_1$ and $u \tasim v$.
    We need to show that there exists a $v' \in W_1'$ such that $u \tasim v'$.
    By the definition of $\zgto^t$ there exists a $\delta$ and $u_0 \in W$ such that
    $\tastep{A}{(q, u_0)}{\delta}{t}{(q_1, v)}$. Because $\tasim$ is reflexive
    and $\alphasim(W) \subseteq \alphasim(W')$
    there is a $v_0 \in W'$ with $u_0 \tasim v_0$.
    As $\tasim$ is a time-abstract simulation, we can find a $v'$ such that
    $\tastep{A}{(q, v_0)}{\delta}{t}{(q_1, v')}$ and $v \tasim v'$.
    Thus we have $v' \in W_1'$ by definition of $\zgto^t$ and $u \tasim v'$ by transitivity.
\end{proof}

In comparison to the main part of the paper (Sec.\ \ref{sec:incorporating_state_of_the_art_abstraction}),
we will first prove a more abstract version of Thm.\ \ref{thm:topo-numbering-simulation}
and then instantiate it for the concrete case of timed automata.
\begin{theorem}
    \label{thm:topo-numbering-simulation}
    Assume that $\to$ simulates $\to'$ with $\siminv$ and that the following conditions hold:
    \begin{enumerate}
        \item $\tsse'$ simulates $\tsse$ with $\sim$.
        \item For all $s$, $s'$, $t$, and $t'$, if $s \sim s'$, $t \sim t'$ and $s' \tsse' t'$, then $s \tsse t$.
        \item For all $s$ and $s'$, if $s \sim s'$ then $\phi'(s') \implies \phi(s)$.
    \end{enumerate}
    Let $I' := \set{s' \,|\, \exists s \in I.\, s \sim s'}$
    and $f'(s') := \mathit{Min}\, \set{f(s) \,|\, s \in I \wedge s \sim s'}$.
    \begin{enumerate}
        \item Suppose that $\rI$ verifies condition (1) of definition \ref{def:restricted-reachability-invariant}
        for $\to$. Then $(I', \tsse')$ verifies conditions (1) of definition \ref{def:restricted-reachability-invariant} for $\to'$.
        \item Suppose that $f$ verifies conditions (1) and (2) of definition \ref{def:restricted-forward-numbering}
        for $\to$, $\tsse$, $I$, and $\phi$. Then $f'$ verifies conditions (1) and (2) of definition \ref{def:restricted-forward-numbering} for $\to'$, $\tsse'$, $I'$, and $\phi'$.
    \end{enumerate}
\end{theorem}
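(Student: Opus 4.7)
The plan is to verify both claims by directly chasing the hypotheses. I would prove part (1) first, because the transition-chasing pattern it uses resurfaces in an amplified form in part (2).

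For part (1), fix $s_0 \in I'$ with $s_0 \to' t_0$. By definition of $I'$ pick $r \in I$ with $r \sim s_0$. The simulation of $\to'$ by $\to$ (reading $\siminv$ as the converse of $\sim$) gives $w$ with $r \to w$ and $w \sim t_0$. Applying the reachability-invariant property of $\rI$ to $r \in I$ and $r \to w$ produces $u \in I$ with $w \tsse u$. Finally, by assumption (1), since $\tsse'$ simulates $\tsse$ via $\sim$, the combination $w \sim t_0$ and $w \tsse u$ yields $u_0$ with $t_0 \tsse' u_0$ and $u \sim u_0$. Then $u_0 \in I'$ by definition, which is the required witness.

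For part (2), fix $s_0, t_1 \in I'$ together with $s_0 \to' t_0$ and $t_0 \tsse' t_1$. I would establish the pointwise claim that for every $r \in I$ with $r \sim s_0$, $f(r) \geq f'(t_1)$, and $f(r) > f'(t_1)$ whenever $\phi'(s_0)$. Both conclusions then follow by taking the minimum on the left, using that the minimum of a non-empty integer-valued set, all of whose members strictly exceed $k$, itself strictly exceeds $k$. For the pointwise claim, fix such an $r$ and simulate $s_0 \to' t_0$ by $w$ with $r \to w$ and $w \sim t_0$. Since $t_1 \in I'$, pick $\tilde{u} \in I$ with $\tilde{u} \sim t_1$. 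Applying assumption (2) to $w \sim t_0$, $\tilde{u} \sim t_1$, and $t_0 \tsse' t_1$ yields $w \tsse \tilde{u}$. The topological-numbering conditions for $f$ at $r, \tilde{u} \in I$, $w \in S$, $r \to w$, $w \tsse \tilde{u}$ then give $f(r) \geq f(\tilde{u})$, and $f(r) > f(\tilde{u})$ whenever $\phi(r)$; assumption (3) with $r \sim s_0$ lifts $\phi'(s_0)$ to $\phi(r)$. Finally $f(\tilde{u}) \geq f'(t_1)$ by definition of $f'$ as a minimum.

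The main subtlety is picking the correct witnesses in part (2). The upstairs target $t_0$ need not lie in $I'$, so one cannot conclude anything via $t_0$ alone; instead one has to introduce an independent lower witness $\tilde{u} \in I$ matching the $\tsse'$-upper bound $t_1$, and use assumption (2) precisely to convert the upstairs $\tsse'$-edge into a downstairs $\tsse$-edge whose both endpoints lie in $I$. Only then is it legitimate to apply the topological-numbering property of $f$. The remainder is bookkeeping: composition of relations and monotonicity of the minimum.
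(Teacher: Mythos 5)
Your proposal is correct and follows essentially the same route as the paper's proof: simulate the $\to'$-step downstairs, use hypothesis (1) to lift the invariant's $\tsse$-successor back up for part (1), and use hypothesis (2) to pull the $\tsse'$-edge down to a $\tsse$-edge between $I$-elements before invoking the numbering conditions for part (2). The only cosmetic difference is that the paper instantiates the argument directly at the minimizer $s_0$ with $f'(s')=f(s_0)$, whereas you prove the bound pointwise for every witness $r\sim s_0$ and then pass to the minimum, which is an equivalent bookkeeping choice.
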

\begin{proof}
    \begin{enumerate}
        \item Assume $s \in I$, $s \sim s'$ and $s' \to' t'$.
        We need to show that there is an $r' \in I'$ with $t' \tsse' r'$.
        By simulation we can find a $t$ such that $s \to t$ and $t \sim t'$.
        Thus there is an $r \in I$ with $t \tsse r$.
        With (1), we know there is an $r'$ with $t' \tsse' r'$ and $r \sim r'$.
        \item Assume $s \in I$, $s \sim s'$, $s' \to' t'$, $t' \tsse' r'$,
        $r \in I$, and $r \sim r'$.
        There is an $s_0 \in I$ such that $f'(s') = f(s_0)$ and $s_0 \sim s'$.
        By simulation we can find a $t$ such that $s_0 \to t$ and $t \sim t'$.
        With (2), we have $t \tsse r$.
        Thus we have $f(s_0) \geq f(r)$ and $\phi(s_0) \implies f(s_0) > f(r)$.
        Moreover, $f'(r') \leq f(r)$ by definition of $f'$.
        Finally, with (3) and $f'(s') = f(s_0)$, we get $f'(s') \geq f'(r')$
        and $\phi'(s') \implies f'(s') > f(r')$.

    \end{enumerate}
\end{proof}

\emph{Instantiation of theorem \ref{thm:topo-numbering-simulation} for timed automata}.
As mentioned above, we instantiate the theorem for $\to \;:=\; \zgto$, $\to' \;:=\; \zgalphasim$,
and $(q, Z) \sim (q, W) \iff W = \alphasim(Z)$.
To satisfy condition (3), any $\phi$ which is compatible with $\alphasim$ and the instantiation $\phi' \;:=\; \phi$ suffice.
Let $\tsse^*$ be the subsumption relation defined in the proof of theorem \ref{thm:no-buechi-run}.
That is:
\begin{align*}
&(q, Z) \tsse^* (q, Z') \\
   \iff\ &\exists k.\, Z \subseteq \alphasim(Z')
   \wedge (q, Z', k) \in C \\
   &\ \wedge (\forall (q, Z'', k') \in C.\, Z \subseteq \alphasim(Z'') \implies k' \geq k)
\end{align*}
We set \footnote{In the statement of Thm.\ \ref{thm:topo-numbering-simulation} given in Sec.\ \ref{sec:incorporating_state_of_the_art_abstraction}, we set $\tsse \,:=\, \tsse^*$ for brevity.}:
\begin{itemize}
    \item $(q, Z) \tsse (q', W') \iff \exists Z'.\; W' = \alphasim(Z') \wedge (q, Z) \tsse^* (q', Z')$ and
    \item $(q, W) \tsse' (q', W')\\ \iff \exists Z, Z'.\; W = \alphasim(Z)\wedge W' = \alphasim(Z') \wedge (q, Z) \tsse^* (q', Z')$.
\end{itemize}
Then condition (1) is trivially satisfied.
Condition (2) can be verified by observing that $\tsse^*$ is constructed in a way such that it is deterministic,
\ie\ if $(q, Z) \tsse^* (q', Z_1)$ and $\alphasim(Z) = \alphasim(Z')$, then $(q, Z') \tsse^* (q', Z_1)$
(because $Z \subseteq \alphasim(Z_1) \iff Z' \subseteq \alphasim(Z_1)$
if $\alphasim(Z) = \alphasim(Z')$).
Finally, it follows from proposition \ref{thm:zg-mono} that $\zgto$ simulates $\zgalpha$ for $\siminv$
as outlined in Sec.\ \ref{sec:incorporating_state_of_the_art_abstraction}.

\end{document}